\documentclass[11pt,a4paper]{amsart}

\usepackage{amsmath,amssymb,mathtools}
\usepackage[T1]{fontenc}
\usepackage[utf8]{inputenc}
\usepackage{enumitem}
\usepackage[dvipsnames]{xcolor}
\usepackage{cancel}
\usepackage{comment}

\numberwithin{equation}{section}

\newtheorem{theorem}{Theorem}[section]
\newtheorem{lemma}[theorem]{Lemma}
\newtheorem{definition}[theorem]{Definition}
\newtheorem{remark}[theorem]{Remark}
\newtheorem{corollary}[theorem]{Corollary}

\newtheorem{observation}[theorem]{Observation}
\newtheorem{conjecture}[theorem]{Conjecture}

\newtheorem{proposition}[theorem]{Proposition}

\DeclareMathOperator{\End}{End}
\DeclareMathOperator{\Id}{Id}

\DeclareMathOperator{\spanop}{span}

\newcommand{\K}{\mathbb K}

\def\reals{\mathbb{R}}
\def\compl{\mathbb{C}}

\def\Tr{\mathrm{Tr}}

\def\O{\mathcal{O}_M}

\def\T{\mathcal{T}_M}

\def\reals{\mathbb{R}}
\def\compl{\mathbb{C}}

\def\Tr{\mathrm{Tr}}

\def\O{\mathcal{O}_M}

\def\T{\mathcal{T}_M}

\newcommand{\eqa}{\begin{eqnarray}}
\newcommand{\eeqa}{\end{eqnarray}}
\newcommand{\beq}{\begin{equation}}
\newcommand{\eeq}{\end{equation}}

\newcommand{\w}{\omega}

\newcommand{\nav}{\nabla^{(A,v)}}
\newcommand{\Cyc}{\mathcal{Z}}

\begin{document}

\title[Haantjes torsion and integrability: a proof of BKM conjecture]{Haantjes torsion and integrability: a proof of Bolsinov-Konyaev-Matveev's conjecture}
\author{Alessandro Arsie}
\address{Department of Mathematics and Statistics, The University of Toledo, 2801 W. Bancroft St., Toledo, OH 43606, USA}
\email{alessandro.arsie@utoledo.edu}

\author{Paolo Lorenzoni}
\address{Dipartimento di Matematica e Applicazioni, Universit\`a degli Studi di Milano-Bicocca, Via Roberto Cozzi 53, I-20125 Milano, Italy; and INFN Sezione di Milano-Bicocca}
\email{paolo.lorenzoni@unimib.it}

\date{}
\maketitle

\begin{abstract}
We prove a conjecture formulated by Bolsinov,  Konyaev and Matveev in \cite{BKMint} stating that integrability of a system of hydrodynamic type ${\bf u}_t=A({\bf u}) {\bf u}_x$ with $\mathfrak{gl}$-regular $A$ at a point $p$ implies the vanishing of the Haantjes tensor of $A$ and of all its symmetries in a neighborhood of $p$.   As a consequence,  using the result of \cite{BKM3},  in a neighborhood of an algebraically generic point, any integrable system of hydrodynamic type defined by a $\mathfrak{gl}$-regular operator field can be written as ${\bf u}_t=X({\bf u})\circ {\bf u}_x$ where $X$ is a vector field and $\circ$ is a commutative associative product satisfying Hertling-Manin conditions.
\end{abstract}

\section{Introduction}\label{SectionIntro}
Let 
\begin{equation}\label{SHTintro}	
{\bf u}_t=A({\bf u}){\bf u}_x
\end{equation}
be a system of quasilinear first order evolutionary PDEs (system of hydrodynamic type). It is well-known that the entries $A^i_j({\bf u})$ of the matrix $A$ defining the system behave as the components of a $(1,1)$ tensor field when  changing the  dependent variables of the system. In the strictly hyperbolic case Tsarev proved that sufficient conditions for integrability are
\begin{enumerate}
\item The vanishing of the Haantjes tensor associated with $A$, i.e. the existence of Riemann invariants reducing system \eqref{SHTintro} to the form 
\begin{equation}	
{\bf r}_t=V({\bf r}){\bf r}_x
\end{equation}
with $V=\text{diag}(v^1,....,v^n)$.
\item The semi-Hamiltonian condition
\begin{equation}
	\label{tsarev1}
	\partial_j\Gamma^i_{ik}=\partial_k\Gamma^i_{ij},\qquad i\ne j\ne k\ne i.
\end{equation}
where
\begin{equation}\label{ChS}
	\Gamma^i_{ij}:=\frac{\partial_j v^i}{v^j-v^i},\qquad i\ne j.
\end{equation}
Sevennec proved in \cite{Sevennec} that semi-Hamiltonian systems coincide with conservative systems of hydrodynamic type admitting Riemann invariants. Conditions
 \eqref{tsarev1} are  automatically satisfied in the Hamiltonian case confirming a conjecture by Novikov about the integrability of diagonalizable Hamiltonian systems of hydrodynamic type.
\end{enumerate}
In a recent paper \cite{BKMint} the authors proved that the Haantjes condition is also necessary for integrability in the diagonal  case.
\newline
\newline
The study of non-diagonalizable systems is more recent:
\begin{itemize}
\item In \cite{LM} the authors considered systems defined by a $(1,1)$-tensor field of the form
\begin{equation}\label{LM}
A=L-a\,I
\end{equation}
where $L$ is a Nijenhuis operator, $I$ is  the identity operator and $a$ is a function satisfying the cohomological equation
\[d\cdot d_L\,a=0,\]
where $d_L$ is the Fr\"{o}licher-Nijenhuis differential. 
This construction does not require the existence of Riemann invariants. For instance, in the case of Nijenhuis operators coming from regular F-manifolds, these systems have been  studied  in \cite{LPVG3} in the general case (arbitrary  number of Jordan blocks and Jordan  blocks of arbitrary size) using a special set of coordinates introduced by  David  and Hertling  in  \cite{DH}.
\item In \cite{LPR,AL13,LPVG1,LPVG2, LPVG3,ALPVG,AL} the authors considered systems of the form (F-systems)
\begin{equation}\label{F-sys-intro}
u_t=X\circ u_x
\end{equation}
where $\circ$ is a commutative associative product with unit satisfying Hertling-Manin condition. For such systems the integrability conditions can be written in terms of the Riemann tensor of a linear connection uniquely defined by the system \eqref{F-sys-intro} and in terms of the structure functions of the product:
\beq\label{shc-intro}
R^s_{lmi}c^j_{ks}+R^s_{lik}c^j_{ms}+R^s_{lkm}c^j_{is}=0,\,\text{or}\,R^j_{skl}c^s_{mi}+R^j_{smk}c^s_{li}+R^j_{slm}c^s_{ki}=0.
\eeq
It was proved in \cite{LPR}  that  the above conditions in the  semisimple case reduce  to  Tsarev's semi-Hamiltonian  condition. In the holomorphic setting, assuming regularity, David and Hertling proved in \cite{DH} the existence of a special set of coordinates reducing $X\circ$ to a block-diagonal operator 
\[\text{diag}(V_{(1)} , \dots, V_{(r)}),\qquad r\leq n,\] 
whose generic $\alpha^{\text{th}}$ block, of size $m_\alpha$, is of the lower-triangular Toeplitz form
\begin{equation}\label{toeplitz}
	V_{(\alpha)}=
	\begin{bmatrix}
		v^{1(\alpha)} & 0 & \dots & 0\cr
		v^{2(\alpha)} & v^{1(\alpha)} & \dots & 0\cr
		\vdots & \ddots & \ddots & \vdots\cr
		v^{m_\alpha(\alpha)} & \dots & v^{2(\alpha)} & v^{1(\alpha)}
	\end{bmatrix}.
\end{equation} 
In  order to  extend Tsarev's theory,  one  is forced 
 to  restrict to  the subclass of  systems (called \emph{Darboux-Tsarev systems}) defined by functions,   $v^{i(\alpha)}$, depending  on a \textit{subset} of the variables $(u^1 , \dots, u^n)$. More precisely, 
\begin{equation}\label{DT-intro}
v^{i(\alpha)}=v^{i(\alpha)}(u^{1(1)} , \dots, u^{k_1(1)} , \dots, u^{1(r)} , \dots, u^{k_r(r)}),
\end{equation}
where $k_s=m_s$ if $i>m_s$ and  $k_s=i$ otherwise. It  turns out that  these conditions are sufficient to prove the completeness of the symmetries in the hodograph formula and the existence of the metrics related to  Hamiltonian formalism (see \cite{LPVG2,LPVG3}). 
 These results rely on a classical Darboux theorem applied to the linear system for the symmetries and to Dubrovin-Novikov-Tsarev system for the metrics respectively. In the recent paper \cite{AL} we proved  that  in the analytic/holomorphic setting Darboux-Tsarev restrictions are no longer necessary for the completeness of the symmetries. In the same setting it is possible to prove a generalization of Novikov's  conjecture and of Sevennec's result about integrability of conservative systems. These results apply to cyclic F-manifolds (that include regular ones) and are based on 
 the Cauchy–Kovalevskaya theorem combined with the involutivity of the linear system for the symmetries coming  from the conditions \eqref{shc-intro}.
\item Other interesting examples appeared in relation to some classical finite-dimensional integrable systems \cite{BKM1}, in the study  of confluence of hypergeometric functions \cite{KK}, in the study of parabolic regularization of the gradient catastrophes for the Burgers–Hopf equation \cite{KO} and as hydrodynamic reductions of integrable models (see for instance \cite{FP,VF, XF} and references therein).
\end{itemize}

In the recent paper \cite{BKM3} the authors  found ``canonical coordinates'' for tensor fields of type $(1,1)$ with vanishing Haantjes  tensor. It turns out  that in the $\mathfrak{gl}$-regular case, in a neighborhood of an algebraically generic point (together, the two conditions are the analog of David-Hertling's regularity) there exist local coordinates reducing 
 the tensor field to a block-diagonal operator 
\[\text{diag}(V_{(1)} , \dots, V_{(r)}),\qquad r\leq n.\]
In the holomorphic setting the generic $\alpha^{\text{th}}$ block, of size $m_\alpha$, is of the upper-triangular Toeplitz form
\begin{equation}\label{toeplitz-bkm}
	V_{(\alpha)}=
	\begin{bmatrix}
		v^{m(\alpha)} & v^{m-1(\alpha)} & \dots & v^{1(\alpha)}\cr
		0 & v^{m(\alpha)} & \dots & v^{2(\alpha)}\cr
		\vdots & \ddots & \ddots & \vdots\cr
         0 & \dots & 0 & v^{m(\alpha)}
	\end{bmatrix}.
\end{equation} 
For the sake of being self-contained, we recall some terminology here.  The \emph{algebraic type} of an operator field $A$ at a point $q$ is the Segre characteristic of $A_q$, that is the structure of its Jordan canonical form as recorded by the sizes of the Jordan blocks attached to each eigenvalue, the values of the eigenvalues themselves being irrelevant. A point $p\in M$ is \emph{algebraically generic} if the algebraic type of $A$ is the same at all points of some neighborhood of $p$, and \emph{singular} otherwise.  By continuity the algebraically generic points form an open subset of $M$. For a $\mathfrak{gl}$-regular operator field each eigenvalue carries exactly one Jordan block, so algebraic genericity amounts to the requirement that the number of distinct eigenvalues, their multiplicities, and their partition into real eigenvalues and pairs of complex conjugate ones be locally constant, in other words that eigenvalues do not collide nor split near $p$. This is exactly what forces the number $r$ of blocks and their sizes $m_{\alpha}$ in \eqref{toeplitz-bkm} to be locally constant, and it is why the hypothesis appears in the two Propositions below.

Applying a further elementary change of variables one ends  up exactly with the same form obtained by David and Hertling in the case of regular F-systems. This  means  that the main result of \cite{BKM3} can be rephrased in  the following way. 
\begin{proposition}\label{prop1.1}
In a neighborhood of an algebraically generic point, systems of hydrodynamic type, defined by $\mathfrak{gl}$-regular tensor fields of type $(1,1)$ with vanishing Haantjes  tensor, coincide with regular F-systems.
\end{proposition}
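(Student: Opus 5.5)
We argue both inclusions; in each direction the work is to produce the appropriate normal form locally and compare.

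\emph{Regular F-systems are $\mathfrak{gl}$-regular with vanishing Haantjes torsion.} Let $u_t = X\circ u_x$ be a regular F-system, i.e.\ $\circ$ is commutative, associative, with unit $e$, satisfies the Hertling--Manin condition, and $L=X\circ$ is regular in the sense of David--Hertling. By \cite{DH}, near an algebraically generic point there are coordinates in which $X\circ$ is block diagonal with lower-triangular Toeplitz blocks \eqref{toeplitz}. Regularity means each eigenvalue of $X\circ$ carries a single Jordan block, which is precisely $\mathfrak{gl}$-regularity of $A=X\circ$. For the Haantjes tensor we use the coordinates directly. In them the coordinate distributions $\mathcal D_\alpha=\spanop\{\partial_{u^{i(\alpha)}}\}_i$ are the generalized eigendistributions of $A$ and are integrable. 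Inside each block, $A$ restricted to $\mathcal D_\alpha$ is a polynomial in the constant nilpotent shift $N_\alpha$, i.e.\ $A|_{\mathcal D_\alpha}=\sum_k v^{k(\alpha)}N_\alpha^{k-1}$. By the characterization of \cite{BKM3}, $H_A=0$ is equivalent to the existence of coordinates in which $A$ takes this block-diagonal polynomial-in-constant-nilpotent form. Hence $H_A=0$. The only thing to check is that the lower-triangular form \eqref{toeplitz} and the upper-triangular form \eqref{toeplitz-bkm} differ only by reversing the ordering of coordinates within each block, which is immediate.

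\emph{Haantjes-free $\mathfrak{gl}$-regular operators give F-systems.} Conversely, let $A$ be $\mathfrak{gl}$-regular with $H_A=0$ near an algebraically generic point $p$. By \cite{BKM3} there are coordinates putting $A$ into the form \eqref{toeplitz-bkm}. Reversing the order within each block gives \eqref{toeplitz}. We then define the product block by block. On each block, set $\partial_{i}\circ\partial_{j}=\partial_{i+j-1}$ when $i+j-1\le m_\alpha$ and $0$ otherwise; products between different blocks vanish. The unit is $e=\sum_\alpha\partial_{1(\alpha)}$. This is the truncated polynomial algebra $\bigoplus_\alpha \K[t]/(t^{m_\alpha})$, so it is commutative and associative. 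Its structure constants are constant in these coordinates, so the product is flat. A flat product with flat unit satisfies the Hertling--Manin condition, since the relevant expression $L_{X\circ Y}(\circ)-X\circ L_Y(\circ)-Y\circ L_X(\circ)$ is built from derivatives of the structure constants, which all vanish here. Take $X=\sum_\alpha\sum_k v^{k(\alpha)}\partial_{k(\alpha)}$. Then $X\circ$ is exactly the Toeplitz matrix \eqref{toeplitz}, so $A=X\circ$.

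\emph{Regularity and the main obstacle.} The resulting F-manifold is regular because $A=X\circ$ is $\mathfrak{gl}$-regular. What remains is to match the remaining regularity hypotheses of the two settings. On the David--Hertling side one needs regularity of $E\circ$ for an Euler-type field, or simply cyclicity. On the \cite{BKM3} side one needs algebraic genericity, together with real/complex blocks in the real smooth case. I expect this matching to be the main obstacle. In the real setting, complex-conjugate eigenvalue pairs give real $2m$-dimensional blocks with Toeplitz entries in $\K=\compl$, and one must verify that the same construction works with a complex Toeplitz algebra $\compl[t]/(t^m)$ regarded as a real algebra. I would handle this by realifying the truncated polynomial algebra, which is still commutative, associative, flat and unital, so the Hertling--Manin argument above carries over verbatim.
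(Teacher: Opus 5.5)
Your proposal is correct and follows essentially the paper's route. The paper's only justification is that the Bolsinov--Konyaev--Matveev normal form \eqref{toeplitz-bkm}, after the elementary change of variables you make explicit (reversing the coordinates inside each block and relabelling the entries), is exactly David--Hertling's form \eqref{toeplitz}; the inclusion of F-systems among Haantjes operators comes from the Hertling--Manin condition. The point you flag as the main obstacle is harmless: your constant product admits the Euler field $E=\sum_{\alpha}\sum_{i}\bigl((2-i)u^{i(\alpha)}+c_{i(\alpha)}\bigr)\partial_{u^{i(\alpha)}}$, and $E\circ$ is regular near $p$ once every $c_{2(\alpha)}\neq0$ and the constants $c_{1(\alpha)}$ make the eigenvalues $E^{1(\alpha)}$ pairwise distinct at $p$.
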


In  this paper we study the relation between the integrability and the vanishing of the Haantjes tensor. As far as  we know in all known examples of integrable systems of hydrodynamic type the Haantjes tensor vanishes. For instance:
\begin{itemize}
\item In the case of tensors of the form \eqref{LM} this  follows from the vanishing of Nijenhuis torsion  of $L$ and from the general properties of the Haantjes tensor (see \cite{Bogoyavlenskij}).
\item In the case of F-systems \eqref{F-sys-intro} the vanishing of the Haantjes tensor follows from Hertling-Manin condition (we refer to  \cite{LPR}  for  details).
\end{itemize}
For this reason, it is a common opinion among people working in this field that integrability should be related in some way to the vanishing of the Haantjes tensor.
For instance, Ferapontov and  Marshall proved that  the  vanishing of the  Haantjes tensor is a necessary condition for integrability of conservative hydrodynamic chains \cite{FM}.
In the context of systems of hydrodynamic type, a precise conjecture, supported by several computations, has been recently proposed in \cite{BKMint}  by Bolsinov, Konyaev and Matveev for systems of hydrodynamic type defined by a $\mathfrak{gl}$-regular tensor field.  Broadly speaking, the conjecture asserts that $\mathfrak{gl}$-regularity together with integrability forces the vanishing of the Haantjes torsion. We first state it precisely. 

We recall the terminology of \cite{BKMint}. For operator fields $L,M$ the $(1,2)$-tensor field $\langle L,M\rangle$ is defined by
\[
        \langle L,M\rangle(\xi,\eta):=M[L\xi,\eta]+L[\xi,M\eta]-[L\xi,M\eta]-LM[\xi,\eta],
\]
and $M$ is a \emph{symmetry} of $L$ if $LM=ML$ and the symmetric part of $\langle L,M\rangle$ vanishes, that is $\langle L,M\rangle(\xi,\xi)=0$ for every $\xi$.  The relation of being a symmetry is symmetric in $L,M$, and operators $K_1,\dots,K_n$ are \emph{mutual symmetries} if $K_i$ is a symmetry of $K_j$ for all $i,j$. By \cite[Introduction]{BKMint} this is equivalent to the formal compatibility of the corresponding systems $u_{t_i}=K_i(u)u_x$, that is to the pairwise commutativity of their flows.  In particular $M$ is a symmetry of $L$ if and only if $u_t=L(u)u_x$ and $u_\tau=M(u)u_x$ have commuting flows.

\begin{conjecture}[Bolsinov-Konyaev-Matveev \cite{BKMint}]\label{BKMc}
Let $M$ be an $n$-dimensional manifold and let $K_1,\dots,K_n$ be mutual symmetries on $M$ that are linearly independent at every point. Assume that there exists a linear combination $\sum_ic_iK_i$, with constant coefficients, which is $\mathfrak{gl}$-regular. Then the Haantjes torsion of all $K_i$ vanishes.
\end{conjecture}

A word on the $\mathfrak{gl}$-regularity locus is in order, since it is used systematically below.  We call an operator field \emph{$\mathfrak{gl}$-regular at $p$} in the pointwise sense of Definition~\ref{def:cyclic-pair}, and simply \emph{$\mathfrak{gl}$-regular} when it is $\mathfrak{gl}$-regular at every point of $M$,  the latter is the reading of the conjecture above.  By Lemma~\ref{lem:glregular} the set of points at which an operator field is $\mathfrak{gl}$-regular is open, so the hypothesis is a local one and so are all the arguments below.  What we actually prove is the  statement that $\mathfrak{gl}$-regularity of $A:=\sum_ic_iK_i$ at a \emph{single} point $p$ already forces the Haantjes torsion of $A$, and thus that of all the $K_i$, to vanish on a whole neighborhood of $p$ (Theorem~\ref{thm:conjecture} and Proposition~\ref{prop:reduction}).  Since the Haantjes torsion is a tensor field, and therefore continuous, it is in fact enough that $A$ be $\mathfrak{gl}$-regular at the points of a dense subset of $M$ for the conclusion of the conjecture to hold on all of $M$.

In this paper we prove Conjecture \ref{BKMc} and combining this result with Proposition \ref{prop1.1}, we get  the following important consequence.

\begin{proposition}\label{F-Prop}
In a neighborhood of an algebraically generic point, integrable systems of hydrodynamic type, defined by $\mathfrak{gl}$-regular tensor fields of type $(1,1)$, coincide with regular F-systems satisfying integrability condition \eqref{shc-intro}.
\end{proposition}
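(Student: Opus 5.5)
Proposition~\ref{F-Prop} combines the main theorem of the paper (the proof of Conjecture~\ref{BKMc}, i.e.\ Theorem~\ref{thm:conjecture} together with Proposition~\ref{prop:reduction}) with Proposition~\ref{prop1.1}. The argument has two inclusions, and both are local around an algebraically generic point $p$.

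\emph{Integrable implies regular F-system.} Let ${\bf u}_t=A({\bf u}){\bf u}_x$ be integrable with $A$ $\mathfrak{gl}$-regular. Here ``integrable'' means, as in \cite{BKMint}, that the system belongs to a family of $n$ mutual symmetries $K_1,\dots,K_n$ that are linearly independent near $p$ and include $A$ in their span. Because $A$ is itself a constant-coefficient combination of the $K_i$ and is $\mathfrak{gl}$-regular at $p$, Theorem~\ref{thm:conjecture} and Proposition~\ref{prop:reduction} give that the Haantjes torsion of $A$ vanishes on a neighborhood $U$ of $p$. Algebraic genericity is an open condition, so we may shrink $U$ until every point of it is algebraically generic. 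Proposition~\ref{prop1.1} then applies on $U$. In the coordinates of \cite{BKM3}, followed by the elementary change of variables from upper to lower Toeplitz blocks, $A$ takes the form $X\circ$, where $\circ$ is the regular F-manifold product of David--Hertling type and $X$ is the vector field whose components are the entries $v^{i(\alpha)}$.

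It remains to check that the resulting F-system satisfies \eqref{shc-intro}. The mutual symmetries of $A$ are exactly the operators generating commuting flows. For F-systems it is known (\cite{LPR,AL}) that the existence of a complete family of commuting flows, meaning that the symmetries of the form $Y\circ$ depend on the maximal number of arbitrary functions, is equivalent to conditions \eqref{shc-intro}. So I would show that each $K_i$ commuting with the $\mathfrak{gl}$-regular $A$ is a polynomial in $A$, and hence of the form $Y_i\circ$. The $n$ independent $K_i$ then supply the required family of symmetries, and the necessity direction of the results in \cite{LPR,AL} gives \eqref{shc-intro}.

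\emph{Regular F-system implies integrable.} Conversely, a regular F-system satisfying \eqref{shc-intro} has Haantjes torsion zero by the Hertling--Manin condition. By \cite{AL} it admits a complete family of symmetries $Y\circ$, and it is $\mathfrak{gl}$-regular because regularity of the F-manifold means each eigenvalue of $X\circ$ carries a single Jordan block. Choosing $n$ independent symmetries among these, for instance those generated by a frame of flat vector fields, yields mutual symmetries, so the system is integrable in the sense of \cite{BKMint}.

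\emph{Main obstacle.} The main difficulty is the middle step of the first direction: identifying ``integrability'' in the sense of \cite{BKMint} (existence of $n$ independent mutual symmetries) with the condition \eqref{shc-intro}. This requires showing that operators commuting with a cyclic operator are polynomials in it, which follows from cyclicity, and then invoking the correct necessary/sufficient form of the symmetry-completeness theorem for F-systems. I would first try to extract \eqref{shc-intro} directly as the compatibility conditions of the linear system satisfied by $Y$, evaluated on the $n$ independent solutions $Y_i$.
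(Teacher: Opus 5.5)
Your main line is the paper's own. There, Proposition \ref{F-Prop} is obtained by combining Theorem \ref{thm:conjecture} and Proposition \ref{prop:reduction} with Proposition \ref{prop1.1}, on a neighborhood shrunk so that it consists of algebraically generic points. The paper does not re-derive \eqref{shc-intro}; it takes it from the literature as the integrability condition of F-systems. Note also that the polynomial form $K_i=\sum_k g_kA^k$ needs no new argument: it follows from Lemma \ref{lem:AK-compatible} together with the last assertion of Theorem \ref{thm:E2}.

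The part where you go beyond the paper has a flaw in its first version. Having $n$ pointwise independent symmetries is not the same as having a family ``depending on the maximal number of arbitrary functions''. So the necessity half of a completeness theorem cannot be applied to $K_1,\dots,K_n$.

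Your fallback is the right argument, and it is short because the first compatibility condition is of order zero in $Y$:
\begin{itemize}
\item Work with respect to the natural connection, which in the F-system setting of \cite{LPR,AL} satisfies $\nabla e=0$ and has $\nabla c$ totally symmetric. Then $Y\circ$ is a symmetry iff $\nabla_ZY=V\circ Z$ with $V=\nabla_eY$.
\item Differentiating once gives $R(W,Z)Y=(\nabla_WV)\circ Z-(\nabla_ZV)\circ W$.
\item The cyclic sum of $R(W,Z)Y\circ U$ over $(W,Z,U)$ then vanishes by commutativity and associativity. This is the first form of \eqref{shc-intro} contracted with $Y^l$.
\item The vectors $Y_i=K_ie$ are pointwise independent by Lemma \ref{lem:cyclic-linear-algebra}(ii), so \eqref{shc-intro} itself follows.
\end{itemize}

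In the converse direction, drop the ``frame of flat vector fields''. The natural connection is not flat in general; its curvature is exactly what \eqref{shc-intro} constrains. Instead, take $n$ solutions with independent initial values from the completeness theorem of \cite{AL}, which is available in the analytic/holomorphic category.
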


Although the natural connection used below originates from the cyclic F-manifold framework of \cite{AL}, all the properties of cyclic pairs required for the proof of the Bolsinov--Konyaev--Matveev conjecture are established independently in the present paper.

The paper is organized as follows: in Section 2 we introduce cyclic pairs and their properties, and in Section 3 we introduce the associated natural connections and their properties. Using these notions, in Section 4 we prove Conjecture \ref{BKMc}. The paper also contains two appendices: the first is devoted to a generalization of Tsarev's method of constructing solutions using symmetries. We show that the method also works for general systems of hydrodynamic type (even non-integrable ones), assuming the existence of a cyclic vector field. In the case of F-systems, the construction coincides with that in \cite{LPVG2} (see also \cite{AL}). In the second appendix, we give an example of a $2$-component system of hydrodynamic type, defined by a tensor field with vanishing Haantjes torsion, which is not reducible to an F-system in a neighborhood of a point that is not algebraically generic.

\section{Cyclic pairs}\label{SectionCyclicPairs}

Throughout, $M$ is a manifold of dimension $n$ in the category of $C^{\infty}$ manifolds, of $C^{\omega}$ real analytic manifolds, or of complex (holomorphic) manifolds, and $\K=\reals$ or $\compl$ accordingly.  We denote by $\O$ the structure sheaf, by $\T$ the tangent sheaf and by $\T^*$ the cotangent sheaf.  By an \emph{operator field} we mean a section $A$ of $\End(\T)$, that is a $(1,1)$-tensor field. For a torsionless connection $\nabla$ and an operator field $A$ we use
\begin{equation}\label{eq:dnablaA}
(d_{\nabla}A)(Y,Z)=\nabla_Y(AZ)-\nabla_Z(AY)-A([Y,Z])=(\nabla_YA)(Z)-(\nabla_ZA)(Y).
\end{equation}
The \emph{Nijenhuis torsion} and the \emph{Haantjes torsion} of an operator field $A$ are the $(1,2)$-tensor fields
\begin{align}
\label{eq:Nijenhuis}
N_A(Y,Z)&:=A^2[Y,Z]+[AY,AZ]-A[AY,Z]-A[Y,AZ],\\
\label{eq:Haantjes}
H_A(Y,Z)&:=A^2N_A(Y,Z)+N_A(AY,AZ)-A\,N_A(AY,Z)-A\,N_A(Y,AZ),
\end{align}
respectively.  $A$ is a \emph{Haantjes operator} if $H_A=0$.
The composition of operator fields is denoted by juxtaposition and $[A,B]:=AB-BA$.  Furthermore,  if $A$ is an operator field,  we denote the induced endomorphism of $T_pM$ by $A_p$ instead of $A(p)$. 

\begin{definition}\label{def:cyclic-pair}
Let $A$ be an operator field on $M$ and let $v$ be a vector field on $M$. The pair $(A,v)$ is a \emph{cyclic pair} on an open set $U\subseteq M$ if
\[
        v,\,Av,\,\dots,\,A^{n-1}v
\]
is a frame of $\T|_U$. In this case we say that $v$ is \emph{cyclic} for $A$ on $U$. A point $p\in M$ is a \emph{cyclic point} of $A$, or $A$ is \emph{$\mathfrak{gl}$-regular at $p$}, if there exists $w\in T_pM$ such that $w,A_pw,\dots,A_p^{n-1}w$ is a basis of $T_pM$.
\end{definition}

The following lemma shows that if $A$ is $\mathfrak{gl}$-regular at $p$ then there exists a neighborhood $V$ of $p$ and a local vector field $v$ such that $(A,v)$ is a cyclic pair.  The proof hinges on the rational canonical form of a matrix and the classification of modules over a principal ideal domain (PID), so it works over any field, bypassing the need to use eigenvectors. 
\begin{lemma}\label{lem:glregular}
Let $A$ be an operator field on $M$ and let $p\in M$. The following are equivalent:
\begin{enumerate}
\item[(i)] $A$ is $\mathfrak{gl}$-regular at $p$, that is, there is $w\in T_pM$ with $w,A_pw,\dots,A_p^{n-1}w$ a basis of $T_pM$;
\item[(ii)] the minimal polynomial of $A_p$ has degree $n$, equivalently coincides with the characteristic polynomial of $A_p$;
\item[(iii)] over an algebraic closure $\overline\K$ of $\K$, every eigenvalue of $A_p$ has geometric multiplicity one.
\end{enumerate}
If these hold, then for every vector field $v$ on $M$ with $v(p)=w$ as in \emph{(i)} there is an open neighborhood $V$ of $p$ such that $(A,v)$ is a cyclic pair on $V$. 
\end{lemma}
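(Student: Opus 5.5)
The plan is to prove the equivalences by viewing $T_pM$ as a module over the principal ideal domain $\K[t]$, with $t$ acting as $A_p$, and then to get the last assertion from continuity of a determinant.

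\emph{(i)$\Leftrightarrow$(ii).} The module $T_pM$ is finitely generated and torsion, so the structure theorem for modules over a PID gives invariant factors $d_1\mid d_2\mid\cdots\mid d_k$ with $T_pM\cong\bigoplus_j \K[t]/(d_j)$. Here $d_k$ is the minimal polynomial and $\prod_j d_j$ is the characteristic polynomial. Condition (i) says exactly that $T_pM$ is a cyclic module generated by $w$: the $\K$-linear map $\K[t]/(\mathrm{ann}\,w)\to T_pM$, $f\mapsto f(A_p)w$, is then surjective, so $\deg\mathrm{ann}(w)\ge n$. Since $\mathrm{ann}(w)$ divides the minimal polynomial, whose degree is at most $n$ by Cayley--Hamilton, the minimal polynomial has degree $n$ and equals the characteristic polynomial. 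Conversely, if the minimal polynomial has degree $n$, then $\deg d_k=n=\sum_j\deg d_j$. So $k=1$, the module is cyclic, and any generator $w$ gives the basis in (i), since $\K[t]/(d_1)$ has $\K$-basis $1,t,\dots,t^{n-1}$.

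\emph{(ii)$\Leftrightarrow$(iii).} Pass to $\overline\K$; minimal and characteristic polynomials are unchanged by field extension. For an eigenvalue $\lambda$, its geometric multiplicity is the number of Jordan blocks for $\lambda$. Its exponent in the minimal polynomial is the largest block size, and its exponent in the characteristic polynomial is the sum of the block sizes. These two exponents agree for every $\lambda$ exactly when each $\lambda$ has a single block, which is (iii). Equivalently, $\dim\ker(A_p-\lambda)$ equals the number of invariant factors divisible by $t-\lambda$, and this is $1$ for all $\lambda$ iff $k=1$.

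\emph{Cyclic pair on a neighborhood.} Let $v$ be a vector field with $v(p)=w$. In a chart around $p$, the function $q\mapsto\det\bigl(v(q),A_qv(q),\dots,A_q^{n-1}v(q)\bigr)$ is continuous in the $C^\infty$ case, and analytic or holomorphic in the other cases. It is nonzero at $p$ by (i), so it is nonzero on some open neighborhood $V$ of $p$, and there $v,Av,\dots,A^{n-1}v$ is a frame of $\T|_V$.

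The only delicate point is to avoid eigenvectors when $\K=\reals$, which is why the PID route is used for (i)$\Leftrightarrow$(ii). The remaining steps are routine.
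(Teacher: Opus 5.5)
Your proposal is correct and follows essentially the same route as the paper: the structure theorem over $\K[t]$ for (ii)$\Rightarrow$(i), the Jordan-block count over $\overline\K$ for (ii)$\Leftrightarrow$(iii), and nonvanishing near $p$ of the continuous function $\det[\,v\;\;Av\;\cdots\;A^{n-1}v\,]$ for the final assertion. The only difference is cosmetic: for (i)$\Rightarrow$(ii) you bound the degree of the minimal polynomial using the annihilator of $w$ and Cayley--Hamilton, whereas the paper argues directly that a polynomial of degree $<n$ annihilating $A_p$ would give a linear dependence among $w,A_pw,\dots,A_p^{n-1}w$.
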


\begin{proof}
Write $F:=T_pM$, $L:=A_p\in\End(F)$, and let $\mu$ and $\chi$ be the minimal and characteristic polynomials of $L$.  Both are monic, $\mu\mid\chi$, and $\deg\chi=n$. The equivalence of \emph{(i)} and \emph{(ii)} is well-known (see for instance Theorem 7.9 in \cite{Roman}).  We report it here in the current language for completeness.

\emph{(i)$\Rightarrow$(ii).} If $w,Lw,\dots,L^{n-1}w$ is a basis of $F$ and $q(L)=0$ for a polynomial $q$ of degree ${}<n$, then $q(L)w=0$ expresses a nontrivial linear dependence among $w,Lw,\dots,L^{n-1}w$ unless $q=0$. Thus no monic polynomial of degree ${}<n$ annihilates $L$, so $\deg\mu=n$ and $\mu=\chi$.

\emph{(ii)$\Rightarrow$(i).} Regard $F$ as a module over the principal ideal domain $\K[t]$, the indeterminate $t$ acting as $L$. By the structure theorem for finitely generated modules over a  PID (see \cite{Aluffi, Roman}), there are monic polynomials $d_1\mid d_2\mid\dots\mid d_r$ in $\K[t]$, the invariant factors of $L$, with
\[
        F\;\cong\;\bigoplus_{i=1}^{r}\K[t]/(d_i)
\]
as $\K[t]$-modules.  Moreover $d_r=\mu$ and $d_1\cdots d_r=\chi$. If $\deg\mu=n=\deg\chi$ then $\deg d_r=\sum_i\deg d_i$, forcing $r=1$, so that $F\cong\K[t]/(\chi)$ is a cyclic $\K[t]$-module. Let $w\in F$ correspond to the class of $1$.  Then $\{t^k\cdot w\}_{k\geq0}=\{L^kw\}_{k\geq0}$ spans $F$, and since $\dim_{\K}\K[t]/(\chi)=\deg\chi=n$ the vectors $w,Lw,\dots,L^{n-1}w$ are a basis of $F$. This argument is valid over any field, in particular over $\K=\reals$ and over $\K=\compl$. It uses the rational canonical form and does not require  factorization of $\chi$ or eigenvectors.

\emph{(ii)$\Leftrightarrow$(iii).} Over $\overline\K$ decompose $\chi=\prod_{s}(t-\lambda_s)^{m_s}$ with the $\lambda_s$ distinct. For each $s$ the geometric multiplicity of $\lambda_s$ is $\dim_{\overline\K}\ker(L-\lambda_s\,\Id)$, which equals the number of Jordan blocks of $L$ with eigenvalue $\lambda_s$, while the multiplicity of $(t-\lambda_s)$ in $\mu$ equals the size of the largest such block.  Therefore every eigenvalue has geometric multiplicity one if and only if, for each $s$, there is a single Jordan block, that is if and only if the exponent of $(t-\lambda_s)$ in $\mu$ equals $m_s$ for every $s$. Since $\mu\mid\chi$ this is equivalent to $\mu=\chi$, that is to (ii). The extension of scalars from $\K$ to $\overline\K$ does not change $\mu$ or $\chi$.

Now,  let $v$ be a vector field on $M$ with $v(p)=w$ as in (i). In a local frame near $p$ the vector fields $v,Av,\dots,A^{n-1}v$ have components depending continuously (indeed as smoothly as $A$ and $v$) on the point, and

\[
        \Delta:=\det\bigl[\,v\;\;Av\;\;\cdots\;\;A^{n-1}v\,\bigr]
\]
is a continuous function on a neighborhood of $p$ with $\Delta(p)\neq0$, because $w,Lw,\dots,L^{n-1}w$ is a basis of $F$.  Thus $\Delta\neq0$ on an open neighborhood $V$ of $p$, and on $V$ the sections $v,Av,\dots,A^{n-1}v$ form a frame of $\T|_V$.  Since a vector field $v$ with $v(p)=w$ always exists, the final assertion follows.
\end{proof}

\begin{remark}\label{rmk:glregular}
\normalfont
The equivalent conditions of Lemma \ref{lem:glregular} are the classical characterizations of a non-derogatory, or $\mathfrak{gl}$-regular, endomorphism.  The terminology is the one of \cite{BKM3}. Over $\K=\compl$ a cyclic vector is obtained at once by summing cyclic vectors of the generalized eigenspaces, whose minimal polynomials are pairwise coprime. Over $\K=\reals$ this recipe is unavailable when $A_p$ has a non-real eigenvalue.  The proof of Lemma \ref{lem:glregular}  avoids eigenvectors and produces a cyclic vector directly from the rational canonical form.
\end{remark}

The next lemma collects the linear algebra of a cyclic endomorphism that will be used repeatedly.
\begin{lemma}\label{lem:cyclic-linear-algebra}
Let $V$ be an $n$-dimensional vector space over $\K$, let $A\in\End(V)$ and let $v\in V$ be cyclic for $A$. Denote by
\[
        \K[t]_{<n}:=\{p\in\K[t]\;:\;\deg p<n\},
        \quad
        \K[A]:=\spanop\{\Id,A,\dots,A^{n-1}\}\subseteq\End(V),
\]
and by $\Cyc(A):=\{B\in\End(V)\;:\;[A,B]=0\}$ the centralizer of $A$. Then:
\begin{enumerate}
\item[(i)] the map $\K[t]_{<n}\rightarrow\K[A]$, $p\mapsto p(A)$, is an isomorphism of vector spaces,  in particular $\Id,A,\dots,A^{n-1}$ are linearly independent and $\dim_{\K}\K[A]=n$;
\item[(ii)] for every $Y\in V$ there is a unique polynomial $p_Y\in\K[t]_{<n}$ with
\[
        Y=p_Y(A)\,v,
\]
and $Y\mapsto p_Y$ is a linear isomorphism $V\rightarrow\K[t]_{<n}$.  Equivalently the evaluation map $\Phi_v:\K[A]\rightarrow V$, $\Phi_v(P):=Pv$, is a linear isomorphism, and $p_Y(A)=\Phi_v^{-1}(Y)$;
\item[(iii)] $\Cyc(A)=\K[A]$.
\end{enumerate}
\end{lemma}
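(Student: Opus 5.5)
The plan is to use only the cyclicity of $v$, i.e.\ that $v,Av,\dots,A^{n-1}v$ is a basis of $V$. Everything is then elementary linear algebra, with part (ii) as the basic tool and parts (i) and (iii) following from it.

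For (ii), given $Y\in V$, expand it in the basis $v,Av,\dots,A^{n-1}v$ as $Y=\sum_{k<n}a_kA^kv$. Set $p_Y(t):=\sum_k a_kt^k$. Uniqueness of the coefficients gives uniqueness of $p_Y\in\K[t]_{<n}$, and linearity of coordinates gives that $Y\mapsto p_Y$ is linear. This map is bijective because $p\mapsto p(A)v$ is its inverse: any $p\in\K[t]_{<n}$ is recovered from the coordinates of $p(A)v$.

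Part (i) follows next. The map $\K[t]_{<n}\to\K[A]$, $p\mapsto p(A)$, is linear and surjective by the definition of $\K[A]$. For injectivity, suppose $p(A)=0$ with $\deg p<n$. Then $p(A)v=0$, which is a linear relation among $v,\dots,A^{n-1}v$, so $p=0$. Hence $\Id,\dots,A^{n-1}$ are independent and $\dim\K[A]=n$. The evaluation map $\Phi_v$ is the composite of the inverse of this isomorphism with $p\mapsto p(A)v$, which is the inverse of $Y\mapsto p_Y$. So $\Phi_v$ is an isomorphism and $p_Y(A)=\Phi_v^{-1}(Y)$.

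For (iii), the inclusion $\K[A]\subseteq\Cyc(A)$ is trivial. For the converse, take $B$ with $[A,B]=0$ and put $P:=p_{Bv}(A)\in\K[A]$, so that $Bv=Pv$. Because $B$ and $P$ both commute with $A$,
\[
BA^kv=A^kBv=A^kPv=PA^kv\qquad\text{for all } k.
\]
Thus $B$ and $P$ agree on the basis $\{A^kv\}_{k<n}$, hence $B=P\in\K[A]$. There is no real obstacle here; the only point needing care is the last step, namely checking that commuting with $A$ lets one propagate the equality $Bv=Pv$ along the whole cyclic basis.
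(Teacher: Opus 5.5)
Your proof is correct and takes essentially the same route as the paper: injectivity of $p\mapsto p(A)$ by applying $p(A)$ to $v$, $\Phi_v$ as an isomorphism built from these maps, and $\Cyc(A)=\K[A]$ via $P:=p_{Bv}(A)$, propagating $Bv=Pv$ through commutation with $A$. The differences are only cosmetic: you prove (ii) first, directly from coordinates, instead of via the paper's dimension count, and in (iii) you check agreement on the basis vectors $A^kv$ rather than on an arbitrary $Y=p_Y(A)v$.
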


\begin{proof}
(i) The map $p\mapsto p(A)$ is linear and surjective onto $\K[A]$ by the definition of $\K[A]$. If $p(A)=0$ with $p=\sum_{k=0}^{n-1}c_kt^k$, then applying $p(A)$ to $v$ gives $\sum_{k=0}^{n-1}c_kA^kv=0$.  Since $v,Av,\dots,A^{n-1}v$ is a basis of $V$, all $c_k$ vanish, so $p=0$ and the map is injective. As $\dim_{\K}\K[t]_{<n}=n$, it is an isomorphism and $\dim_{\K}\K[A]=n$.

(ii) Consider the composite
\[
        \K[t]_{<n}\xrightarrow{\;p\mapsto p(A)\;}\K[A]\xrightarrow{\;\Phi_v\;}V,
        \qquad
        p\longmapsto p(A)\,v .
\]
The first arrow is the isomorphism of (i). The second, $\Phi_v$, is linear. It is surjective because its image contains $\Phi_v(A^k)=A^kv$ for $k=0,\dots,n-1$, hence a basis of $V$, and it is therefore an isomorphism since $\dim\K[A]=n=\dim V$. The composite $p\mapsto p(A)v$ sends $t^k$ to $A^kv$, so it carries the basis $1,t,\dots,t^{n-1}$ of $\K[t]_{<n}$ to the basis $v,Av,\dots,A^{n-1}v$ of $V$, and is thus a linear isomorphism. For $Y\in V$ let $p_Y\in\K[t]_{<n}$ be its preimage: it is the unique polynomial of degree $<n$ with $p_Y(A)v=Y$, and $Y\mapsto p_Y$ is the inverse isomorphism. Finally $p_Y(A)=\Phi_v^{-1}(Y)$, since $\Phi_v(p_Y(A))=p_Y(A)v=Y$.

(iii) The inclusion $\K[A]\subseteq\Cyc(A)$ is immediate, as $A$ commutes with each $A^k$. Conversely let $B\in\Cyc(A)$ and define $P:=p_{Bv}(A)\in\K[A]$, so that $Pv=Bv$ by (ii). For arbitrary $Y\in V$ write $Y=p_Y(A)v$.  Since $B$ commutes with $A$ it commutes with every element of $\K[A]$,  thus
\[
        BY=B\,p_Y(A)\,v=p_Y(A)\,Bv=p_Y(A)\,Pv=P\,p_Y(A)\,v=PY,
\]
where the second to last equality holds because $P\in \K[A]\subseteq\Cyc(A)$.
Therefore $B=P\in\K[A]$.
\end{proof}

\medskip

The next lemma is a further elementary property of cyclic endomorphisms, used twice below: in the proof of Theorem \ref{thm:E2} and in the proof of Lemma \ref{lem:AK-compatible}.  It is the mechanism through which cyclicity upgrades a partial vanishing statement for a vector-valued $2$-form to its vanishing.

\begin{lemma}\label{lem:skew-rigidity}
Let $F$ be an $n$-dimensional vector space over $\K$, let $A\in\End(F)$ and let $v\in F$ be cyclic for $A$. Let $H\in\Lambda^2F^*\otimes F$ satisfy
\begin{equation}\label{eq:skew-rigidity-hyp}
        H(Y,AZ)+H(Z,AY)=0\qquad\forall\,Y,Z\in F .
\end{equation}
Then $H=0$.
\end{lemma}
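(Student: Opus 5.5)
The plan is to convert the hypothesis into a statement that $A$ is \emph{self-adjoint} with respect to $H$, and then play this against the skew-symmetry of $H$ on the cyclic basis $v,Av,\dots,A^{n-1}v$.

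\textbf{Step 1 (self-adjointness).} For $Y,Z\in F$, the hypothesis \eqref{eq:skew-rigidity-hyp} gives $H(Y,AZ)=-H(Z,AY)$. Skew-symmetry of $H$ gives $-H(Z,AY)=H(AY,Z)$. Combining the two,
\[
H(AY,Z)=H(Y,AZ)\qquad\forall\,Y,Z\in F .
\]

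\textbf{Step 2 (iteration).} By induction on $k$ we get $H(A^kY,Z)=H(Y,A^kZ)$ for all $k\ge 0$. The inductive step is
\[
H(A^{k+1}Y,Z)=H(A^kY,AZ)=H(Y,A^{k+1}Z),
\]
where the first equality is Step 1 applied to the pair $(A^kY,Z)$ and the second is the inductive hypothesis applied to the pair $(Y,AZ)$.

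\textbf{Step 3 (vanishing on the cyclic basis).} Fix $0\le i,j\le n-1$ and set $m=i+j$. Moving all powers of $A$ to one slot or the other with Step 2, we get
\[
H(A^iv,A^jv)=H(A^{m}v,v)\qquad\text{and}\qquad H(A^iv,A^jv)=H(v,A^{m}v).
\]
Skew-symmetry gives $H(v,A^mv)=-H(A^mv,v)$. Hence $H(A^iv,A^jv)$ equals its own negative, and so it vanishes since $\K=\reals$ or $\compl$ has characteristic zero.

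\textbf{Step 4 (conclusion).} The vectors $A^iv$, $0\le i\le n-1$, form a basis of $F$ because $v$ is cyclic (Lemma~\ref{lem:cyclic-linear-algebra}(ii)). By bilinearity, $H=0$.

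There is no serious obstacle. The only points needing care are the sign bookkeeping in Step 1, where skewness is used exactly once to turn the anti-symmetry in the hypothesis into self-adjointness, and the remark that characteristic $\neq 2$ is used in Step 3. Note also that Step 2 involves powers $A^m$ with $m$ up to $2n-2$; this causes no difficulty, since Step 3 never needs to express $A^mv$ in terms of the basis.
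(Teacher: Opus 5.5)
Your proof is correct and follows the paper's argument: you derive self-adjointness of $A$ with respect to $H$ from the hypothesis plus skew-symmetry, iterate it, and use the sign clash $H(v,A^{i+j}v)=-H(v,A^{i+j}v)$ on the cyclic basis. The only difference is cosmetic: the paper first scalarizes, working with $h_{\lambda}=\lambda\circ H$ for arbitrary $\lambda\in F^*$, while you work directly with the vector-valued $H$, which is equally valid.
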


\begin{proof}
Let $\lambda\in F^*$ be arbitrary and set $h_{\lambda}(Y,Z):=\lambda\bigl(H(Y,Z)\bigr)$, a skew-symmetric bilinear form on $F$.  Pairing \eqref{eq:skew-rigidity-hyp} with $\lambda$ gives $h_{\lambda}(Y,AZ)+h_{\lambda}(Z,AY)=0$, and since $h_{\lambda}(Z,AY)=-h_{\lambda}(AY,Z)$ this reads
\[
        h_{\lambda}(AY,Z)=h_{\lambda}(Y,AZ)\qquad\forall\,Y,Z\in F,
\]
that is, $A$ is self-adjoint with respect to $h_{\lambda}$.  Iterating this identity, for all $i,j\geq0$,
\[
        h_{\lambda}(A^iv,A^jv)=h_{\lambda}(v,A^{i+j}v)=h_{\lambda}(A^{i+j}v,v)=-h_{\lambda}(v,A^{i+j}v),
\]
where the first two equalities move the powers of $A$ across $h_{\lambda}$ and the last one is the skew-symmetry of $h_{\lambda}$.  Therefore $h_{\lambda}(v,A^{i+j}v)=0$, and therefore $h_{\lambda}(A^iv,A^jv)=0$ for all $i,j\geq0$.  Since $v,Av,\dots,A^{n-1}v$ is a basis of $F$, this forces $h_{\lambda}=0$.  As $\lambda\in F^*$ is arbitrary, $H=0$.
\end{proof}

\section{The natural connection of a cyclic pair}\label{SectionNaturalConnection}

\begin{theorem}\label{thm:E1}
Let $(A,v)$ be a cyclic pair on $U$. Then there exists a unique torsionless connection $\nav$ on $U$ such that
\begin{equation}\label{eq:E1-characterization}
        \nav v=0,
        \qquad
        d_{\nav}A=0 .
\end{equation}
\end{theorem}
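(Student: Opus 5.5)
Since $\nabla$ is torsionless, it is determined by the values $\nabla_{A^iv}(A^jv)$ for $0\le i,j\le n-1$; the frame $\{A^kv\}_{k=0}^{n-1}$ of Definition~\ref{def:cyclic-pair} makes this precise. I will show that the two conditions \eqref{eq:E1-characterization}, together with torsion-freeness, force these values, which gives uniqueness. Then I will check that the forced values define a connection with all the required properties, which gives existence.

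\emph{Uniqueness.} Set $e_k:=A^kv$. The condition $d_\nabla A=0$ reads
\[
\nabla_Y(AZ)-\nabla_Z(AY)=A[Y,Z].
\]
Torsion-freeness gives $\nabla_Z(AY)=\nabla_{AY}Z+[Z,AY]$. Substituting, we get the recursion
\[
\nabla_Y(AZ)=\nabla_{AY}Z+[Z,AY]+A[Y,Z].
\]
Starting from $\nabla_{e_i}e_0=0$ (from $\nabla v=0$) and applying the recursion with $Y=e_i$, $Z=e_{j-1}$, we obtain
\[
\nabla_{e_i}e_j=\nabla_{e_{i+1}}e_{j-1}+[e_{j-1},e_{i+1}]+A[e_i,e_{j-1}].
\]
Iterating $j$ times lowers the second index to $0$, so every $\nabla_{e_i}e_j$ is an explicit expression in brackets of the $e_k$'s. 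When the first index reaches $e_{i+j}$ with $i+j\ge n$, we still have $\nabla_{e_{i+j}}e_0=0$ because $\nabla v=0$ holds in every direction. Hence $\nabla$ is determined.

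\emph{Existence.} Define $\nabla$ on the frame by the closed formula produced by this iteration, and extend it by the Leibniz rule and $\mathcal O$-linearity in the lower slot. Then three properties must be verified:
\begin{enumerate}
\item[(a)] $\nabla v=0$, which holds by construction;
\item[(b)] torsion-freeness, i.e.\ $\nabla_{e_i}e_j-\nabla_{e_j}e_i=[e_i,e_j]$;
\item[(c)] $d_\nabla A=0$ on all pairs $(e_i,e_j)$, including the boundary case where $Ae_{n-1}=A^nv$ must be re-expanded in the frame through the characteristic polynomial coefficients.
\end{enumerate}
This consistency check is the main obstacle. The recursion was derived by assuming symmetry, and symmetry is not automatic from the closed formula.

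To handle this, I would avoid frame computations and build $\nabla$ more conceptually.
\begin{enumerate}
\item Choose any torsionless connection $\nabla^0$. The desired connection is $\nabla=\nabla^0+S$, where $S$ is a symmetric $(1,2)$-tensor.
\item The conditions become linear equations for $S$:
\[
S(Y,v)=-\nabla^0_Yv,\qquad S(Y,AZ)-S(Z,AY)=-(d_{\nabla^0}A)(Y,Z).
\]
\item Solve these pointwise. By Lemma~\ref{lem:cyclic-linear-algebra}(ii), each $Y$ is written as $Y=p_Y(A)v$, so $S$ is pinned down on $(A^iv,A^jv)$ by moving powers of $A$ across.
\item Uniqueness of the pointwise solution follows from Lemma~\ref{lem:skew-rigidity}. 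If $D$ is the difference of two solutions, then $D(\cdot,v)=0$ and $D(Y,AZ)=D(Z,AY)$ with $D$ symmetric. Expanding $D(A^iv,A^jv)$ reduces it to $D(\cdot,v)=0$, so $D=0$.
\item Existence is then a dimension count: the linear map $S\mapsto\bigl(S(\cdot,v),\ S(Y,AZ)-S(Z,AY)\bigr)$ is injective, and I would show its target has matching dimension.
\end{enumerate}

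The delicate point is the target in step (5). It is not all of $\Lambda^2\otimes F$. The right-hand side $d_{\nabla^0}A$ must satisfy a compatibility identity, obtained by computing $\langle$antisymmetrization$\rangle$ of the recursion on $(A^iv,A^jv,A^kv)$. Verifying that $d_{\nabla^0}A$ together with $\nabla^0 v$ always meets this identity is the step I expect to require the real work.
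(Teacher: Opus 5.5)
Your uniqueness argument is fine: both the frame recursion and the self-adjointness argument of step (4) work. Note that step (4) is the symmetric analogue of Lemma~\ref{lem:skew-rigidity}, not that lemma itself, which assumes skew-symmetry.

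\textbf{The gap is existence.} You never establish it, and the obstruction you anticipate in step (5) does not exist. Your map $\Psi:S\mapsto\bigl(S(\cdot,v),\,S(Y,AZ)-S(Z,AY)\bigr)$ goes from $S^2F^*\otimes F$, of dimension $n\cdot\frac{n(n+1)}{2}=\frac{n^2(n+1)}{2}$, to the \emph{whole} space $(F^*\otimes F)\oplus(\Lambda^2F^*\otimes F)$. That space has dimension $n^2+\frac{n^2(n-1)}{2}=\frac{n^2(n+1)}{2}$. You already know $\Psi$ is injective, so it is bijective, and every pair $\bigl(-\nabla^0v,\,-d_{\nabla^0}A\bigr)$ lies in its image. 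A nontrivial ``compatibility identity'' on the right-hand side would make the image a proper subspace and contradict this count. So your claim that the target ``is not all of $\Lambda^2\otimes F$'' is false, and the step you flagged as the real work is vacuous.

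This is the paper's argument, organized slightly differently. The paper first solves $S(\cdot,v)=-\bar\nabla v$ explicitly by $S_0(Y,Z)=-\theta(Y)\bar\nabla_Zv-\theta(Z)\bar\nabla_Yv+\theta(Y)\theta(Z)\bar\nabla_vv$, with $\theta(v)=1$. It then restricts $\mathcal M_AS:=S(Y,AZ)-S(Z,AY)$ to $\mathcal E=\{S_1:\ S_1(v,\cdot)=0\}$, which has dimension $\frac{n^2(n-1)}{2}$. This restriction is injective, hence an isomorphism onto $\Lambda^2F^*\otimes F$.

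Two further points are needed to complete your version. First, solving ``pointwise'' is not enough on its own: the solution must be a tensor field of the same class as $A$ and $v$. This holds because $\Psi$ (or $\mathcal M_A$ restricted to $\mathcal E$) is a bundle map whose matrix in local frames has entries of that class and nowhere-vanishing determinant on the cyclic locus, so its inverse does too.

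Second, once existence is known, the frame route (a)--(c) needs no separate verification. Your recursion was derived as a necessary condition and determines every $\nabla_{e_i}e_j$ uniquely. Its output must therefore coincide with the connection just constructed. Torsion-freeness and the boundary consistency involving $A^nv=\sum_k\sigma_kA^kv$ then come for free, whereas checking them directly from the closed formula would be tedious.
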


\begin{proof}
Fix an auxiliary torsionless connection $\bar\nabla$ on $U$.  Any other torsionless connection differs from it by a symmetric tensor, $\nabla_YZ=\bar\nabla_YZ+S(Y,Z)$ with $S\in S^2\T^*|_U\otimes\T|_U$, so both conditions in \eqref{eq:E1-characterization} become conditions on $S$.  The first one, $\nav v=0$, reads $S(Y,v)=-\bar\nabla_Yv$ and is solvable at once: since $v$ is nowhere vanishing there is locally a one-form $\theta$ with $\theta(v)=1$, and
\[
        S_0(Y,Z):=-\theta(Y)\bar\nabla_Zv-\theta(Z)\bar\nabla_Yv+\theta(Y)\theta(Z)\bar\nabla_vv
\]
is symmetric and satisfies $S_0(Y,v)=-\bar\nabla_Yv$.  Thus the connections with $\nabla v=0$ are exactly those with $S=S_0+S_1$ and $S_1$ ranging in
\[
        \mathcal E:=\{S_1\in S^2\T^*|_U\otimes\T|_U\;:\;S_1(v,\cdot)=0\},
\]
which is the residual freedom left to satisfy the second condition.  The latter is affine in $S$: from $(\nabla_YA)(Z)=(\bar\nabla_YA)(Z)+S(Y,AZ)-A\bigl(S(Y,Z)\bigr)$ and the symmetry of $S$, the terms $A\bigl(S(Y,Z)\bigr)$ cancel upon antisymmetrization in \eqref{eq:dnablaA}, leaving
\[
        d_{\nabla}A=d_{\bar\nabla}A+\mathcal M_AS,
        \qquad
        (\mathcal M_AS)(Y,Z):=S(Y,AZ)-S(Z,AY),
\]
so that $d_{\nabla}A=0$ is the linear equation $\mathcal M_AS_1=-d_{\bar\nabla}A-\mathcal M_AS_0$, whose right-hand side is a section of $\Lambda^2\T^*|_U\otimes\T|_U$ already determined by the data.  The heart of the argument is the following consequence of cyclicity: the bundle morphism
\[
        \mathcal M_A:\mathcal E\longrightarrow\Lambda^2\T^*|_U\otimes\T|_U
\]
is a fiberwise isomorphism.  It is enough to treat the scalar-valued case, the vector-valued one following componentwise in a local frame: if $b$ is symmetric with $b(v,\cdot)=0$ and $b(Y,AZ)-b(Z,AY)=0$, then $b(AY,Z)=b(Y,AZ)$, that is $A$ is self-adjoint with respect to $b$,  hence $b(A^iv,A^jv)=b(v,A^{i+j}v)=0$ for all $i,j\geq0$.  Since $v,Av,\dots,A^{n-1}v$ is a frame,  we have $b=0$.  Injectivity together with the dimension count
\[
        \dim\mathcal E=n\Bigl(\frac{n(n+1)}{2}-n\Bigr)=\frac{n^2(n-1)}{2}=\dim\bigl(\Lambda^2\T^*\otimes\T\bigr)
\]
yields the isomorphism.  Therefore $S_1=\mathcal M_A^{-1}\bigl(-d_{\bar\nabla}A-\mathcal M_AS_0\bigr)$ exists and is unique, and it is of the same class as $A$ and $v$, because in local frames $\mathcal M_A$ is represented by a matrix whose determinant is nowhere zero on the cyclic locus. Setting $\nav:=\bar\nabla+S_0+S_1$ proves existence.  Uniqueness follows from the same isomorphism.
\end{proof}

The preceding construction is the cyclic-pair version of the distinguished connection introduced in \cite{AL} for cyclic F-manifolds. In that setting $v=e$ is the unit and $A=X\circ$. The proof above shows that the construction only uses the cyclicity of the pair $(A,v)$, and neither the F-product nor the special role of the unit is needed.

\begin{definition}\label{def:natural-connection-pair}
The connection $\nav$ of Theorem \ref{thm:E1} is called the \emph{natural connection} of the cyclic pair $(A,v)$.
\end{definition}
\begin{theorem}\label{thm:E2}
Let $(A,v)$ be a cyclic pair on $U$, let $\nabla:=\nav$ and let $B$ be an operator field on $U$. Then the following are equivalent:
\begin{enumerate}
\item[(i)] the two systems
\[
        u_t=A(u)u_x,
        \qquad
        u_{\tau}=B(u)u_x
\]
have commuting flows;
\item[(ii)] $[A,B]=0$ and $d_{\nabla}B=0$.
\end{enumerate}
Moreover, if $[A,B]=0$ then there are unique $a_0,\dots,a_{n-1}\in\O(U)$ with $B=\sum_{k=0}^{n-1}a_kA^k$.
\end{theorem}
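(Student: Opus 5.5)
The final assertion is the quickest and I would settle it first. If $[A,B]=0$, then at each point $q\in U$ the endomorphism $B_q$ lies in $\Cyc(A_q)$. By Lemma~\ref{lem:cyclic-linear-algebra}(iii) it is therefore a unique combination of $\Id,A_q,\dots,A_q^{n-1}$. The coefficients are the components of $Bv$ in the frame $v,Av,\dots,A^{n-1}v$: apply $\Phi_v^{-1}$ fiberwise, i.e. solve the linear system with nowhere-vanishing determinant $\Delta$. Hence the $a_k$ are of the same class as $A,B,v$.

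For the equivalence, my starting point is the commutativity criterion of \cite{BKMint} recalled in the introduction. The flows commute if and only if $[A,B]=0$ and the symmetric part of $\langle A,B\rangle$ vanishes. So I have to show that, given $[A,B]=0$,
\[
\langle A,B\rangle(\xi,\xi)=0\ \forall\xi \iff d_\nabla B=0 .
\]
Since $\langle A,B\rangle$ is tensorial, I would compute it with $\nabla=\nav$, replacing brackets by $[X,Y]=\nabla_XY-\nabla_YX$ (torsion-free). Expanding and using $AB=BA$, the second-order terms cancel, and one expects
\[
\langle A,B\rangle(\xi,\eta)=B(\nabla_\eta A)\xi-(\nabla_{A\xi}B)\eta+(\nabla_{B\eta}A)\xi-A(\nabla_\xi B)\eta \quad(\text{up to regrouping}).
\]
Now $d_\nabla A=0$ lets me swap arguments, $(\nabla_YA)Z=(\nabla_ZA)Y$. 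Also, differentiating $[A,B]=0$ gives $[\nabla_YA,B]+[A,\nabla_YB]=0$. With these two facts the symmetrized expression $\langle A,B\rangle(\xi,\xi)$ should reduce to an expression in $H:=d_\nabla B$ alone. Concretely, I expect the symmetric part to be, up to sign,
\[
H(\xi,A\eta)+H(\eta,A\xi)
\]
(or a variant with $A$ applied outside). Then (ii)$\Rightarrow$(i) is immediate.

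For (i)$\Rightarrow$(ii), the vanishing of the symmetric part gives $H(Y,AZ)+H(Z,AY)=0$ for all $Y,Z$. Lemma~\ref{lem:skew-rigidity}, applied pointwise with the cyclic vector $v$, then yields $H=0$, i.e. $d_\nabla B=0$. This is precisely the advertised use of that lemma.

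The main obstacle is the bookkeeping in the expansion: verifying that after using $d_\nabla A=0$ and the differentiated commutation relation, every $\nabla A$ term cancels in the symmetrization and exactly the combination above of $d_\nabla B$ survives. If it does not close up directly, I would use the representation $B=\sum a_kA^k$. Since $d_\nabla(A^k)=0$ should follow inductively from $d_\nabla A=0$ and commutativity of powers, this gives $d_\nabla B(Y,Z)=\sum_k\bigl(da_k(Y)A^kZ-da_k(Z)A^kY\bigr)$. Similarly, $\langle A,B\rangle$ can be written explicitly in terms of the $da_k$ (since $\langle A,A^k\rangle$ has vanishing symmetric part). Comparing the two expressions reduces everything to a linear-algebra identity in the $da_k$, which can then be handled by Lemma~\ref{lem:skew-rigidity}.
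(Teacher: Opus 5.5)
Your argument is correct, and its skeleton matches the paper's. You reduce commutativity to \eqref{eq:E2-reduced} and then kill $d_\nabla B$ pointwise with Lemma~\ref{lem:skew-rigidity}. The ``moreover'' part is the paper's argument (components of $Bv$ in the frame $v,Av,\dots,A^{n-1}v$).

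The difference is how you reach \eqref{eq:E2-reduced}. The paper imports from \cite{LPVG1} the criterion \eqref{eq:general-criterion}, valid for any torsionless connection and any commuting pair $(P,Q)$, and sets $P=A$, $\nabla=\nav$. You instead compute the symmetric part of the bracket $\langle A,B\rangle$ of \cite{BKMint} directly.

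That computation does close up, but your displayed intermediate formula has the wrong sign on $B(\nabla_\eta A)\xi$ and on $A(\nabla_\xi B)\eta$. With those signs the $\nabla A$ terms would not combine. Torsion-freeness and $[A,B]=0$ give
\[
\langle A,B\rangle(\xi,\eta)=(\nabla_{B\eta}A)\xi-B(\nabla_\eta A)\xi+A(\nabla_\xi B)\eta-(\nabla_{A\xi}B)\eta .
\]
Using $(\nabla_YA)Z=(\nabla_ZA)Y$, the first two terms become $[\nabla_\xi A,B]\eta$. By the differentiated commutation relation this equals $[\nabla_\xi B,A]\eta$. Hence $\langle A,B\rangle(\xi,\eta)=(\nabla_\xi B)(A\eta)-(\nabla_{A\xi}B)\eta$, whose symmetrization is exactly $H(\xi,A\eta)+H(\eta,A\xi)$ with $H=d_\nabla B$. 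No ``variant'' is needed.

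If you postpone using $d_\nabla A=0$, the same computation gives, for any torsionless $\nabla$,
\[
\langle A,B\rangle(\xi,\xi)=(d_\nabla B)(\xi,A\xi)-(d_\nabla A)(\xi,B\xi).
\]
Its polarization is precisely \eqref{eq:general-criterion}. So your route is self-contained modulo the equivalence between symmetries and commuting flows recalled from \cite{BKMint}, and in effect reproves the cited criterion. The paper's citation buys brevity and gets the general form for free, which it reuses in Lemma~\ref{lem:AK-compatible} for pairs $(K_i,K_j)$ neither of which is the cyclic operator.

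Discard the fallback in your last paragraph, because both of its premises are false. By Lemma~\ref{lem:dnabla-A2}(iii), $d_\nabla(A^2)=-N_A$, which does not vanish in general, so $d_\nabla(A^k)=0$ does not follow from $d_\nabla A=0$. By Corollary~\ref{cor:NA}, $\langle A,A^2\rangle$ has vanishing symmetric part only when $N_A=0$. Had these premises held, every $\mathfrak{gl}$-regular operator field would be Nijenhuis. The power part of \eqref{eq:leibniz-split}, which your fallback discards, is exactly what Section~\ref{SectionBKM} has to deal with.
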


\begin{proof}
We use the general criterion for commutativity of systems of hydrodynamic type established in \cite[Lemma 3.3 and Proposition 3.4]{LPVG1}. For arbitrary operator fields $P,Q$ on $U$ and an arbitrary torsionless connection $\nabla$, the flows of $u_{\sigma}=P(u)u_x$ and $u_{\tau}=Q(u)u_x$ commute if and only if $[P,Q]=0$ and
\begin{equation}\label{eq:general-criterion}
        (d_{\nabla}P)(Y,QZ)+(d_{\nabla}P)(Z,QY)=(d_{\nabla}Q)(Y,PZ)+(d_{\nabla}Q)(Z,PY)
        \qquad\forall\,Y,Z .
\end{equation}
None of the entries is required to be the operator field of the cyclic pair, a freedom that will be exploited in Lemma \ref{lem:AK-compatible}.  Here we take $(P,Q)=(A,B)$ and $\nabla=\nav$.  Thus, for a cyclic pair, commutativity of the flows is equivalent to $[A,B]=0$ together with
\begin{equation}\label{eq:E2-reduced}
        (d_{\nabla}B)(Y,AZ)+(d_{\nabla}B)(Z,AY)=0\qquad\forall\,Y,Z .
\end{equation}
That \eqref{eq:E2-reduced} already forces $d_{\nabla}B=0$ is a pointwise consequence of cyclicity.  Indeed, fix $p\in U$ and apply Lemma \ref{lem:skew-rigidity} with $F=T_pM$, the endomorphism $A_p$, the cyclic vector $v(p)$ and $H:=(d_{\nabla}B)_p\in\Lambda^2T^*_pM\otimes T_pM$: relation \eqref{eq:E2-reduced} evaluated at $p$ is exactly hypothesis \eqref{eq:skew-rigidity-hyp}, so $(d_{\nabla}B)_p=0$.  Since $p$ is arbitrary, $d_{\nabla}B=0$.  The converse implication is immediate, since $[A,B]=0$ and $d_{\nabla}A=d_{\nabla}B=0$ satisfy the criterion above trivially.

For the last statement, assume $[A,B]=0$. By Lemma \ref{lem:cyclic-linear-algebra}(iii), $B_q\in\Cyc(A_q)=\K[A_q]$ for every $q\in U$, and by Lemma \ref{lem:cyclic-linear-algebra}(i) the operators 
\[\Id,A_q,\dots,A_q^{n-1}\] form a basis of $\K[A_q]$.  Therefore there are unique scalars $a_0(q),\dots,a_{n-1}(q)$ with
\[
        B_q=\sum_{k=0}^{n-1}a_k(q)\,A_q^k .
\]
It remains to prove that $a_0,\dots,a_{n-1}\in\O(U)$. Applying this identity to the vector $v(q)$ gives
\[
        B_q\,v(q)=\sum_{k=0}^{n-1}a_k(q)\,A_q^k\,v(q),
\]
that is, $a_0,\dots,a_{n-1}$ are the components of the vector field $Bv$ in the frame $v,Av,\dots,A^{n-1}v$ of $\T|_U$.  Since $Bv$ is a section of $\T|_U$ of the same class as $A$ and $v$, these components belong to $\O(U)$.
\end{proof}

In the particular case of cyclic F-systems, the preceding characterization is part of the framework developed in \cite{AL}.

The connection $\nabla^{(A,v)}$ in general depends on the pair $(A,v)$ not just on $A$.  A cyclic endomorphism \emph{always} has a dense set of cyclic vectors over an infinite field (indeed the cyclic vectors of $A_p$ are the complement of the hypersurface $\det[\,w\;\;A_pw\;\;\cdots\;\;A_p^{n-1}w\,]=0$.  As this polynomial is not identically zero, being nonzero at any cyclic vector, and $\K$ is infinite, the cyclic vectors form a dense open subset of $T_pM$, of full measure when $\K=\reals$ or $\compl$).  Therefore given $A$,  we have infinite choices of cyclic pairs $(A,v)$,  each with its own connection.  However,   the next corollary shows that the set of symmetries for the system $u_t=A u_x$ \emph{does not} depend on the choice of the cyclic vector $v$.
\begin{corollary}\label{cor:E2-independence}
Let $A$ be an operator field on $U$ and let $v,v'$ be two cyclic vector fields for $A$ on $U$. Then
\[
        \{B\;:\;[A,B]=0,\;d_{\nabla^{(A,v)}}B=0\}
        =
        \{B\;:\;[A,B]=0,\;d_{\nabla^{(A,v')}}B=0\}.
\]
\end{corollary}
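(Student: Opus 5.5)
The plan is to apply Theorem \ref{thm:E2} twice. That theorem characterizes the symmetries of $A$ intrinsically: condition (i) there is commutativity of the flows of $u_t=A(u)u_x$ and $u_\tau=B(u)u_x$, and this refers only to $A$ and $B$, not to any vector field. Neither the connection nor the choice of $v$ enters condition (i).

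Fix an operator field $B$ on $U$. Since $(A,v)$ is a cyclic pair on $U$, Theorem \ref{thm:E2} applied with $\nabla=\nabla^{(A,v)}$ gives
\[
B\in\{B\;:\;[A,B]=0,\;d_{\nabla^{(A,v)}}B=0\}\iff \text{the flows of } u_t=Au_x \text{ and } u_\tau=Bu_x \text{ commute}.
\]
Since $(A,v')$ is also a cyclic pair on $U$, the same theorem with $\nabla=\nabla^{(A,v')}$ gives the analogous equivalence with the same right-hand side. Combining the two equivalences, $B$ lies in the left-hand set if and only if it lies in the right-hand set, which is the asserted equality.

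The only point to check is that the commutativity criterion invoked in the proof of Theorem \ref{thm:E2} holds for an arbitrary torsionless connection. This is the criterion \eqref{eq:general-criterion} from \cite{LPVG1}, which the paper states for arbitrary $\nabla$. Statement (i) of Theorem \ref{thm:E2} is therefore genuinely connection-independent, and no hidden dependence on $v$ enters. I do not expect a real obstacle here.

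If a direct argument is preferred, one can instead fix $B$ with $[A,B]=0$ and $d_{\nabla^{(A,v)}}B=0$. Writing \eqref{eq:general-criterion} for $(P,Q)=(A,B)$ with the connection $\nabla^{(A,v')}$ and using $d_{\nabla^{(A,v')}}A=0$ reduces it to \eqref{eq:E2-reduced} for $d_{\nabla^{(A,v')}}B$. Lemma \ref{lem:skew-rigidity}, applied pointwise with the cyclic vector $v'(p)$, then yields $d_{\nabla^{(A,v')}}B=0$. This is exactly the content of Theorem \ref{thm:E2} unwound, so I would present the short two-line version above.
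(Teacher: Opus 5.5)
Your proof is correct and is essentially the paper's: both apply the equivalence (i)$\Leftrightarrow$(ii) of Theorem~\ref{thm:E2} to the cyclic pairs $(A,v)$ and $(A,v')$, and observe that the common condition, commutativity of the flows of $u_t=A(u)u_x$ and $u_\tau=B(u)u_x$, involves neither $v$ nor $v'$. Your optional direct variant is also fine once you state why \eqref{eq:general-criterion} may be written with $\nabla^{(A,v')}$: the flows already commute, by the same criterion applied with $\nabla^{(A,v)}$.
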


\begin{proof}
Although the connections $\nabla^{(A,v)}$ and $\nabla^{(A,v')}$ differ in general, each of the two conditions is, by Theorem \ref{thm:E2}, equivalent to one that does not involve $v$ or $v'$. Indeed, applying the equivalence (i)$\Leftrightarrow$(ii) of Theorem \ref{thm:E2} to the cyclic pair $(A,v)$,
\[
        \{B\;:\;[A,B]=0,\;d_{\nabla^{(A,v)}}B=0\}\]\[
        =\{B\;:\;u_t=A(u)u_x\text{ and }u_{\tau}=B(u)u_x\text{ have commuting flows}\},
\]
and applying the same equivalence to the cyclic pair $(A,v')$ gives
\[
        \{B\;:\;[A,B]=0,\;d_{\nabla^{(A,v')}}B=0\}\]\[
        =\{B\;:\;u_t=A(u)u_x\text{ and }u_{\tau}=B(u)u_x\text{ have commuting flows}\}.
\]
The right-hand sides are literally the same set, and commutativity of the flows of $u_t=A(u)u_x$ and $u_{\tau}=B(u)u_x$ is a property of the two systems alone, involving neither $v$ nor $v'$.  Thus the two left-hand sides coincide. In particular, for a fixed $B$ with $[A,B]=0$ the equation $d_{\nabla^{(A,v)}}B=0$ holds if and only if $d_{\nabla^{(A,v')}}B=0$, even though $\nabla^{(A,v)}B$ and $\nabla^{(A,v')}B$ are in general different.
\end{proof}

\section{The Haantjes torsion of an integrable operator field}\label{SectionBKM}
The goal of this section is to prove the  Conjecture \ref{BKMc} of Bolsinov, Konyaev and Matveev recalled in the Introduction.

We begin by establishing the compatibility between $A$ and each single $K_j$, which is used both in the reduction below and in the proof of Theorem \ref{thm:conjecture}.  Note that the hypotheses of the conjecture provide the pairwise compatibility of the $K_i$ among themselves, whereas what is needed is the compatibility of each $K_j$ with the $\mathfrak{gl}$-regular combination $A$.  The passage from the former to the latter is where the constancy of the coefficients $c_i$ enters, and the natural connection of Theorem \ref{thm:E1} is what makes it clear.

\begin{lemma}\label{lem:AK-compatible}
Let $K_1,\dots,K_n$ and $A:=\sum_{i=1}^nc_iK_i$, with $c_1,\dots,c_n\in\K$ constants, be as in the conjecture.  Let $(A,v)$ be a cyclic pair on an open set $U$ and let $\nabla:=\nav$ be its natural connection.  Then, for every $j$,
\[
        [A,K_j]=0
        \qquad\text{and}\qquad
        d_{\nabla}K_j=0
        \quad\text{on }U,
\]
equivalently, by Theorem \ref{thm:E2}, the flows of $u_t=A(u)u_x$ and $u_{t_j}=K_j(u)u_x$ commute.
\end{lemma}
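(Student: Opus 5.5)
The plan is to prove the commutator identity first, then use the general criterion \eqref{eq:general-criterion} with the natural connection $\nabla=\nav$ to extract $d_\nabla K_j=0$.

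\textbf{Step 1: $[A,K_j]=0$.} Mutual symmetry of the $K_i$ includes $K_iK_j=K_jK_i$ for all $i,j$. Since the $c_i$ are constants, $[A,K_j]=\sum_ic_i[K_i,K_j]=0$ pointwise.

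\textbf{Step 2: the criterion for pairs not involving $A$.} For each pair $(i,j)$ the flows of $K_i$ and $K_j$ commute. I apply \eqref{eq:general-criterion} with $(P,Q)=(K_i,K_j)$ and the connection $\nabla=\nav$; this is legitimate because the criterion holds for any torsionless connection. Write $D_i:=d_\nabla K_i$. This gives
\begin{equation*}
D_i(Y,K_jZ)+D_i(Z,K_jY)=D_j(Y,K_iZ)+D_j(Z,K_iY)\qquad\forall\,i,j .
\end{equation*}
Multiply by $c_i$ and sum over $i$. Constancy of the $c_i$ gives $\sum_ic_iD_i=d_\nabla A=0$, by Theorem \ref{thm:E1}. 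Hence the left side vanishes, and for every $j$
\begin{equation*}
D_j(Y,AZ)+D_j(Z,AY)=0\qquad\forall\,Y,Z .
\end{equation*}
Constancy is used here in an essential way. With nonconstant coefficients, $d_\nabla A$ would acquire terms $dc_i\wedge K_i$ and the cancellation would fail.

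\textbf{Step 3: conclusion.} At each $p\in U$, apply Lemma \ref{lem:skew-rigidity} with $F=T_pM$, endomorphism $A_p$, cyclic vector $v(p)$ and $H=(D_j)_p$. This gives $(D_j)_p=0$, so $d_\nabla K_j=0$ on $U$. The equivalence with commuting flows then follows from Theorem \ref{thm:E2}.

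\textbf{Main difficulty.} The delicate point is Step 2, namely exploiting the freedom to use $\nav$ in the criterion for two operators neither of which is $A$. I must also check that \eqref{eq:general-criterion} is linear in $P$ in the form used, so that summing over $i$ with constant weights is valid. It is, because $d_\nabla$ is $\K$-linear and the $Q$-slots are fixed.
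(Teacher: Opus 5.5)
Your proof is correct and follows essentially the same route as the paper: apply \eqref{eq:general-criterion} to $(K_i,K_j)$ with $\nabla=\nav$, take the $c_i$-weighted sum so that the left side collapses to $d_\nabla A=0$ and the right side reassembles $A$, then apply Lemma~\ref{lem:skew-rigidity} pointwise. One point you leave implicit is that the right-hand reassembly $\sum_i c_i\,(d_\nabla K_j)(Y,K_iZ)=(d_\nabla K_j)(Y,AZ)$ rests on the $\O$-bilinearity of the tensor $d_\nabla K_j$ in its arguments, not on the $\K$-linearity of $d_\nabla$ that you invoke; the paper states this explicitly.
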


\begin{proof}
The first assertion $[A,K_j]=0$ is clear since the operators $K_i$ are mutual symmetries.

As for the second assertion, we first  fix $j$.  Since the operators $K_i$ are mutual symmetries, the flows of $u_{t_i}=K_i(u)u_x$ and $u_{t_j}=K_j(u)u_x$ commute for every $i$, so applying \eqref{eq:general-criterion} to the pair $(K_i,K_j)$ yields
\begin{equation}\label{eq:criterion-ij}
        (d_{\nabla}K_i)(Y,K_jZ)+(d_{\nabla}K_i)(Z,K_jY)
        =(d_{\nabla}K_j)(Y,K_iZ)+(d_{\nabla}K_j)(Z,K_iY)
\end{equation}
for every $i$ and all local sections $Y,Z$ of $\T|_U$.  We now multiply \eqref{eq:criterion-ij} by $c_i$ and sum over $i$, treating the two sides separately.

On the left-hand side we use that the operation $B\mapsto d_{\nabla}B$ is $\K$-linear.  Consequently
\[
        \sum_{i=1}^nc_i\,d_{\nabla}K_i
        =d_{\nabla}\Bigl(\sum_{i=1}^nc_iK_i\Bigr)
        =d_{\nabla}A=0,
\]
since  $\nabla=\nav$.  Thus the $c_i$-weighted sum of the left-hand sides of \eqref{eq:criterion-ij} is
\[
        (d_{\nabla}A)(Y,K_jZ)+(d_{\nabla}A)(Z,K_jY)=0 .
\]

On the right-hand side the roles are reversed: the operator field $K_j$ is now fixed, and it is the arguments that vary with $i$.  Since $d_{\nabla}K_j$ is a $(1,2)$-tensor field, it is $\O(U)$-bilinear in its two arguments, so the weighted sum can be moved inside the second slot and there it reassembles $A$:
\[
        \sum_{i=1}^nc_i\,(d_{\nabla}K_j)(Y,K_iZ)
        =(d_{\nabla}K_j)\Bigl(Y,\sum_{i=1}^nc_iK_iZ\Bigr)
        =(d_{\nabla}K_j)(Y,AZ),
\]
and likewise, interchanging $Y$ and $Z$, $\sum_ic_i\,(d_{\nabla}K_j)(Z,K_iY)=(d_{\nabla}K_j)(Z,AY)$.  The $c_i$-weighted sum of \eqref{eq:criterion-ij} therefore reduces to
\begin{equation}\label{eq:reduced-Kj}
        (d_{\nabla}K_j)(Y,AZ)+(d_{\nabla}K_j)(Z,AY)=0
        \qquad\forall\,Y,Z .
\end{equation}

It remains to observe that \eqref{eq:reduced-Kj} forces $d_{\nabla}K_j$ to vanish, and this is where cyclicity is used.  Fix $p\in U$ and apply Lemma \ref{lem:skew-rigidity} with $F=T_pM$, the endomorphism $A_p$, the cyclic vector $v(p)$, and $H:=(d_{\nabla}K_j)_p\in\Lambda^2T^*_pM\otimes T_pM$: the identity \eqref{eq:reduced-Kj} evaluated at $p$ is precisely hypothesis \eqref{eq:skew-rigidity-hyp}, whence $(d_{\nabla}K_j)_p=0$.  Since $p\in U$ is arbitrary, $d_{\nabla}K_j=0$ on $U$.

Finally, $[A,K_j]=0$ and $d_{\nabla}K_j=0$ are condition \emph{(ii)} of Theorem \ref{thm:E2} for the pair $(A,K_j)$, so by the implication \emph{(ii)}$\Rightarrow$\emph{(i)} of that theorem the flows of $u_t=A(u)u_x$ and $u_{t_j}=K_j(u)u_x$ commute.
\end{proof}

\medskip

The following proposition provides a convenient reduction of the conjecture: 
\begin{proposition}\label{prop:reduction}
Let $K_1, \dots,  K_n$ and $A:=\sum_{i=1}^nc_iK_i$ be as in the conjecture and let $W\subseteq M$ be an open set on which $A$ is $\mathfrak{gl}$-regular at every point and $H_A=0$.  Then the Haantjes torsion of every $K_i$ vanishes on $W$.
\end{proposition}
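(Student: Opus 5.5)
The plan is to show that each $K_i$ is, on $W$, a polynomial in $A$ with function coefficients, and then to use the known behaviour of the Haantjes torsion under such operations. Fix $p\in W$. By Lemma~\ref{lem:glregular} there is a neighbourhood $U\subseteq W$ of $p$ and a vector field $v$ such that $(A,v)$ is a cyclic pair on $U$. The $K_i$ are mutual symmetries, so $[A,K_i]=\sum_jc_j[K_j,K_i]=0$. The last statement of Theorem~\ref{thm:E2} then gives unique functions $a^{(i)}_0,\dots,a^{(i)}_{n-1}\in\O(U)$ with
\[
K_i=\sum_{k=0}^{n-1}a^{(i)}_kA^k\qquad\text{on }U .
\]
Since the Haantjes torsion is a tensor and $p$ is arbitrary, it suffices to prove $H_{K_i}=0$ on $U$.

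The key input is the classical fact (Bogoyavlenskij, and also \cite{BKM3}) that, for a Haantjes operator $A$ with $H_A=0$, every operator field of the form $\sum_k a_kA^k$ with arbitrary smooth (or analytic, or holomorphic) coefficients again has vanishing Haantjes torsion. One way to see this is through the characterization of $H_A=0$ by integrability of the generalized eigendistributions. Over $\compl$, or after complexification in the real case, near an algebraically generic point $H_A=0$ means that the distributions $\ker(A-\lambda_s)^{m_s}$ and all their sums are integrable. A polynomial in $A$ with function coefficients preserves each generalized eigenspace, so these distributions are invariant under it, and they are precisely the data whose integrability gives $H=0$ for any operator that preserves them and has a single eigenvalue on each.

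I expect the main obstacle to be this last point, namely passing from ``the $K_i$ preserve integrable invariant distributions'' to $H_{K_i}=0$. On a single block $K_i$ may fail to be scalar plus nilpotent in a controlled way, and the Haantjes torsion of a non-semisimple operator is not governed by eigendistributions alone. I would first try to quote the theorem in \cite{BKM3} that, in the $\mathfrak{gl}$-regular case, $H_A=0$ implies $H_{P(A)}=0$ for every polynomial $P$ with function coefficients. If that is unavailable, I would fall back on a direct computation in the normal form: on the open dense set of algebraically generic points of $U$, use the coordinates of Proposition~\ref{prop1.1} (which needs $H_A=0$), in which $A$ is block-diagonal upper-triangular Toeplitz. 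In these coordinates $K_i$ is also block Toeplitz, and I would verify $H_{K_i}=0$ blockwise using the fact that Toeplitz structure is preserved.

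Finally, density of the algebraically generic points together with continuity of $H_{K_i}$ extends the vanishing to all of $U$, and hence to all of $W$. Lemma~\ref{lem:AK-compatible} (the condition $d_\nabla K_i=0$) is not needed for this step; it will presumably enter the proof that $H_A=0$.
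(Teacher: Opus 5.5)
Your argument is the paper's: the cyclic pair from Lemma~\ref{lem:glregular}, the expansion $K_i=\sum_k a^{(i)}_kA^k$ from the last clause of Theorem~\ref{thm:E2}, and then Bogoyavlenskij's result that a polynomial in a Haantjes operator with function coefficients is again Haantjes. You are also right that only $[A,K_i]=0$ is needed at that step, whereas the paper invokes the stronger Lemma~\ref{lem:AK-compatible}. The cited fact holds without any $\mathfrak{gl}$-regularity or algebraic genericity, so your normal-form fallback with the density argument is unnecessary. The same goes for your eigendistribution heuristic, which in any case is not valid beyond the semisimple case.
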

\begin{proof}
Fix $q\in W$.  Since $A$ is $\mathfrak{gl}$-regular at $q$,  by Lemma \ref{lem:glregular} there are an open neighborhood $U\subseteq W$ of $q$ and a vector field $v$ on $U$ such that $(A,v)$ is a cyclic pair on $U$. 
By Lemma \ref{lem:AK-compatible} the flows of $u_t=A(u)u_x$ and $u_{t_j}=K_j(u)u_x$ commute, so Theorem \ref{thm:E2} applies to the pair $(A,K_j)$ and provides unique $g_0,\dots,g_{n-1}\in\O(U)$ with
\[
        K_j=g_0\Id+g_1A+\dots+g_{n-1}A^{n-1}
        \qquad\text{on }U .
\]
Since $H_A=0$, every operator field which is a polynomial in $A$ with functional coefficients again has vanishing Haantjes torsion, by \cite{Bogoyavlenskij} (see also \cite[Proposition 2.3]{BKM3}), a property which requires no $\mathfrak{gl}$-regularity of $A$.  Therefore,  $H_{K_j}|_U=0$.  In particular $H_{K_j}$ vanishes at $q$. Since $q\in W$ and $j$ are arbitrary, the Haantjes torsion of every $K_j$ vanishes on $W$.
\end{proof}

Therefore,  by Proposition~\ref{prop:reduction} it is enough to prove the following theorem to show that the conjecture holds. 

\begin{theorem}\label{thm:conjecture}
Let $K_1, \dots,  K_n$ and $A:=\sum_{i=1}^nc_iK_i$ be as in the conjecture and let $p\in M$ be a point at which $A$ is $\mathfrak{gl}$-regular.  Then $H_A$ vanishes on an open neighborhood of $p$.  Consequently, if $A$ is $\mathfrak{gl}$-regular at every point of $M$, then $H_A=0$ on $M$ and the same conclusion holds if $A$ is $\mathfrak{gl}$-regular merely at the points of a dense subset of $M$.
\end{theorem}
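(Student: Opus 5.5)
The plan is to work on a neighborhood $U$ of $p$ carrying a cyclic pair $(A,v)$, as provided by Lemma~\ref{lem:glregular}, with natural connection $\nabla:=\nav$ from Theorem~\ref{thm:E1}. On $U$ I would build a commutative associative product with unit $v$ that is compatible with $\nabla$, show that $A$ is multiplication by a vector field, and then invoke the known fact that operators of the form $X\circ$ for an F-product have vanishing Haantjes torsion \cite{LPR}.

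\emph{Step 1: the $K_i$ span $\K[A]$.} By Lemma~\ref{lem:AK-compatible}, $[A,K_j]=0$ and $d_\nabla K_j=0$ on $U$. By Lemma~\ref{lem:cyclic-linear-algebra}(iii), each $(K_j)_q$ lies in the $n$-dimensional space $\K[A_q]$. Since the $K_j$ are pointwise linearly independent, they form an $\O(U)$-basis of the module of operator fields commuting with $A$; in particular $\Id$ and the powers $A^k$ are $\O$-combinations of the $K_j$. Set $w_j:=K_jv$. By Lemma~\ref{lem:cyclic-linear-algebra}(ii), $w_1,\dots,w_n$ is a frame. Define
\[
Y\circ Z:=\Phi_v^{-1}(Y)\,\Phi_v^{-1}(Z)\,v .
\]
This product is commutative and associative with unit $v$, every field commuting with $A$ is of the form $X\circ$, and $K_j=w_j\circ$, $A=(Av)\circ$.

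\emph{Step 2: $\nabla c$ is totally symmetric, where $c(Y,Z)=Y\circ Z$.} Evaluate $d_\nabla K_j(Y,v)=0$. Using $\nabla v=0$, this gives $\nabla_Y w_j=(\nabla_vK_j)Y$. Then
\[
(\nabla_Yc)(w_j,Z)=(\nabla_YK_j)Z-(\nabla_vK_j)(Y)\circ Z .
\]
The first term is symmetric in $Y,Z$ modulo $d_\nabla K_j=0$. So the symmetry of $\nabla c$ reduces to the identity
\[
(\nabla_vK_j)(Y)\circ Z=(\nabla_vK_j)(Z)\circ Y .
\]
This identity holds as soon as $\nabla_vK_j$ commutes with $A$, i.e.\ $\nabla_vK_j\in\K[A]$. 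Symmetry in the first slot then follows by $\O$-linearity, since the $w_j$ form a frame.

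\emph{Main obstacle: proving $[A,\nabla_vK_j]=0$.} I would first differentiate $[A,K_j]=0$ along $v$, which gives $[\nabla_vA,K_j]+[A,\nabla_vK_j]=0$. Since $A$ is itself an $\O$-combination of the $K_i$, it then suffices to know that $\nabla_vK_i$ commutes with every $K_j$. To obtain this I would use the full family of mutual-symmetry relations \eqref{eq:criterion-ij}, now known in the reduced form $d_\nabla K_i=0$. Combining them with the derivative of $[K_i,K_j]=0$, and writing $\nabla_vK_i=\sum_k a_{ik}K_k+R_i$ with $R_i$ in a complement of $\K[A]$, I would try to force $R_i=0$ through a rigidity argument in the style of Lemma~\ref{lem:skew-rigidity}. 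If this direct route fails, the fallback is to check symmetry of $\nabla c$ on the frame $A^kv$ directly, using the closedness of the $A^k$. These are $\O$-combinations of the closed $K_j$, so $d_\nabla A^k=\sum_j dg_{kj}\wedge K_j$, and one must control these wedge terms.

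\emph{Step 3: conclusion.} Once $\nabla c$ is symmetric, $\nabla$ is torsionless and $\nabla v=0$, Hertling's criterion shows that $\circ$ satisfies the Hertling--Manin condition on $U$. Hence $H_{X\circ}=0$ for every vector field $X$ \cite{LPR}, and in particular $H_A=0$ on $U$. Finally, the set of points where $A$ is $\mathfrak{gl}$-regular is open, so $H_A=0$ on that set. If that set is all of $M$, or merely dense in $M$, then $H_A=0$ on $M$ by continuity of the tensor $H_A$.
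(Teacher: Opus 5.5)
\textbf{The gap is in Step 2, and it cannot be closed.} Your Step~2 rests on the claim that $\nabla_vK_j$ commutes with $A$. By your own reduction, this is equivalent to total symmetry of $\nabla c$. The claim is false for a general cyclic $v$, so neither the rigidity argument nor the fallback can establish it.

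Since $\nabla v=0$ and $\nabla$ is torsionless, $\nabla_vY=[v,Y]$, so $\nabla_v$ acts on tensors as $\mathcal{L}_v$. Here is a counterexample satisfying all your hypotheses:
\begin{itemize}
\item Take $n=3$ and coordinates $r^1,r^2,r^3$ on the open set where the $r^i$ are pairwise distinct and $1+r^2\neq0$.
\item Set $K_1=\Id$, $K_2=A=\mathrm{diag}(r^1,r^2,r^3)$, $K_3=A^2$. These are mutual symmetries, since the flows decouple into scalar equations $r^i_t=f_i(r^i)r^i_x$. They are pointwise independent by Vandermonde.
\item Take the cyclic field $v=(1+r^2)\partial_1+\partial_2+\partial_3$.
\end{itemize}
Then $(\nabla_vA)(\partial_2)=(\mathcal{L}_vA)(\partial_2)=(r^1-r^2)\partial_1+\partial_2$. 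So $\nabla_vA$ is not diagonal and does not commute with $A$, which has distinct eigenvalues. Consistently, $\circ_v$ has primitive idempotents $e_i=v^i\partial_i$ with $[e_1,e_2]=-\partial_1\neq0$, so $\circ_v$ is not even an F-product.

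Good units are special: here $v^i$ must depend on $r^i$ only. Producing one at algebraically generic points essentially amounts to the normal form of \cite{BKM3}, which presupposes $H_A=0$, so that route is circular. Appendix~2 shows that near non-generic points a good unit need not exist at all (there in dimension two), whereas the theorem makes no genericity assumption. So the F-product route does not work as designed.

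\textbf{The missing idea.} No product is needed. Lemma~\ref{lem:dnabla-A2}(iii), $d_{\nabla}(A^2)=-N_A$, converts closedness of the symmetries directly into information about $N_A$. Your fallback formula already contains the proof if you apply it with $k=2$ and aim it at $N_A$ instead of at $\nabla c$. Note that the $A^k$ are \emph{not} $d_\nabla$-closed in general; the failure of $A^2$ to be closed is exactly $N_A$.

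Write $A^2=\sum_jg_jK_j$, where $g_j\in\O(U)$ are the components of $A^2v$ in the frame $K_jv$. Then $d_\nabla K_j=0$ gives
\[
N_A(Y,Z)=-\sum_{j=1}^n\bigl[dg_j(Y)\,K_jZ-dg_j(Z)\,K_jY\bigr].
\]
Each $K_j$ commutes with $A$, so the four-term Haantjes combination annihilates every summand by the same cancellation as in Lemma~\ref{lem:cancellation}. Hence $H_A=0$ on $U$.

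The paper does exactly this, organized pointwise. At each $q$ it picks constants with $\sum_ja_jK_j=A^2$ at $q$, expands this combination in powers of $A$ with $c_i(q)=\delta_{2i}$, and reads off $N_A|_q$ in the form \eqref{eq:Lambda-form}. Your Step~1 and the openness/density argument at the end of Step~3 are fine as they stand.
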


The rest of this section is devoted to proving Theorem~\ref{thm:conjecture}. We start with the following observation that disposes of the $2$-dimensional case. 

\begin{observation}\label{rmk:dim2-haantjes}
It is well-known that, in dimension two, every operator field is Haantjes. So the conjecture holds vacuously for $n=2$.  Indeed, the vanishing of $H_A$ is a local statement, so we may argue on a neighborhood $U$ of an arbitrary point, chosen small enough to carry a nowhere vanishing $2$-form (area form) $\w$. Let $A$ be an operator field on $U$.  Since $\dim M=2$ the bundle $\Lambda^2\T^*|_U$ has rank one, so every $2$-form on $U$ is a function multiple of $\w$.  Since $N_A$ is skew-symmetric it can be thought of as a vector-valued $2$-form and, again because $\dim M=2$, there is a vector field $N$ on $U$ with $N_A(Y,Z)=\w(Y,Z)N$. For the same reason the two alternating $2$-forms $\w(A\cdot,A\cdot)$ and $\w(A\cdot,\cdot)+\w(\cdot,A\cdot)$ are function multiples of $\w$.  Evaluating at a point $q\in U$ on a basis $e_1,e_2$ of $T_qM$ with $\w(e_1,e_2)=1$ identifies the two functions through $\w(A_qe_1,A_qe_2)=\det(A_q)$ and $\w(A_qe_1,e_2)+\w(e_1,A_qe_2)=\Tr(A_q)$, giving
\[
        \w(AY,AZ)=\det(A)\,\w(Y,Z),
        \qquad
        \w(AY,Z)+\w(Y,AZ)=\Tr(A)\,\w(Y,Z).
\]
With these, formula \eqref{eq:Haantjes} gives
\[
        H_A(Y,Z)=\w(Y,Z)A^2N+\det(A)\w(Y,Z)N-\Tr(A)\w(Y,Z)AN
        =\] \[=\w(Y,Z)\bigl(A^2-\Tr(A)A+\det(A)\Id\bigr)N=0,
\]
where $\det(A)$ and $\Tr(A)$ denote the functions $q\mapsto\det(A_q)$ and $q\mapsto\Tr(A_q)$ on $U$, and the last equality holds by the Cayley--Hamilton theorem.  Since the point $q$ was arbitrary and the conclusion $H_A=0$ does not involve $\w$, it holds on all of $M$.  Moreover, in dimension two $A$ is $\mathfrak{gl}$-regular at $p$ if and only if $A_p$ is not a multiple of the identity.  
\end{observation}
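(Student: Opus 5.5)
Two claims need proof here. The first is that in dimension two every operator field $A$ has $H_A=0$. The second is that $A$ is $\mathfrak{gl}$-regular at $p$ exactly when $A_p$ is not a multiple of the identity.

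Since $H_A$ is a tensor field, the first claim is pointwise. We may therefore work on a small neighborhood $U$ of an arbitrary point carrying a nowhere vanishing $2$-form $\w$, for instance $\w=dx\wedge dy$ in a chart.

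\emph{Step 1.} The Nijenhuis torsion $N_A$ is skew-symmetric, and $\Lambda^2T^*_qM$ is one-dimensional. Hence $N_A(Y,Z)=\w(Y,Z)\,N$ for a unique vector field $N$ on $U$, obtained as $N:=N_A(e_1,e_2)$ for any local frame with $\w(e_1,e_2)=1$.

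\emph{Step 2.} The skew forms $\w(A\cdot,A\cdot)$ and $\w(A\cdot,\cdot)+\w(\cdot,A\cdot)$ are function multiples of $\w$. Evaluating both on a basis $e_1,e_2$ with $\w(e_1,e_2)=1$ and writing $Ae_i$ in components, the multiples come out to be $\det A$ and $\Tr A$ respectively.

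\emph{Step 3.} Substitute into \eqref{eq:Haantjes}. The four terms give $\w(Y,Z)A^2N$, then $\det(A)\w(Y,Z)N$, and then $-\w(AY,Z)AN-\w(Y,AZ)AN=-\Tr(A)\w(Y,Z)AN$. The sum is $\w(Y,Z)\bigl(A^2-\Tr(A)A+\det(A)\Id\bigr)N$, which vanishes by Cayley--Hamilton. The result $H_A=0$ does not depend on $\w$, so it holds globally.

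\emph{Step 4.} For the second claim, use Lemma \ref{lem:glregular}(i). If $A_p=c\,\Id$, then $w$ and $A_pw$ are always proportional, so no cyclic vector exists. Conversely, suppose $A_p$ is not scalar. If every $w$ satisfied $A_pw\in\K w$, then every vector would be an eigenvector, and a standard argument (apply this to $e_1$, $e_2$ and $e_1+e_2$) forces $A_p$ to be scalar, a contradiction. Hence some $w$ has $w,A_pw$ independent, and $A$ is $\mathfrak{gl}$-regular at $p$.

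There is no real obstacle here. The only care needed is the sign bookkeeping in Step 3, namely checking that the two middle terms of $H_A$ combine exactly into $\Tr(A)$ times $\w$.
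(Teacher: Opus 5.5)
Your proposal is correct and follows the paper's argument essentially step for step: the local area form $\omega$, the factorization $N_A(Y,Z)=\omega(Y,Z)N$, the identification of the two multiples of $\omega$ as $\det(A)$ and $\Tr(A)$, and Cayley--Hamilton. The only difference is your Step 4, which gives a short justification (a non-scalar operator has a non-eigenvector $w$, so $w,A_pw$ is a basis) of the final $\mathfrak{gl}$-regularity criterion that the paper simply asserts.
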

The previous observation eliminates the case $n=2$.  In the following we assume $n\geq3$.  This hypothesis is genuinely needed: the argument requires $A^2_p$ to be one of the elements of the basis $\Id_p,  A_p,  A^2_p, \dots,  A^{n-1}_p$ of $\K[A_p]$.

We will need the following elementary covariant expression for the Nijenhuis torsion.

\begin{lemma}\label{lem:nijenhuis-torsionless}
Let $\nabla$ be a torsionless connection on $U$ and let $A$ be an operator field on $U$. Then
\begin{equation}\label{eq:nijenhuis-torsionless}
        N_A(Y,Z)=(\nabla_{AY}A)(Z)-(\nabla_{AZ}A)(Y)+A\bigl((\nabla_ZA)(Y)\bigr)-A\bigl((\nabla_YA)(Z)\bigr)
\end{equation}
for all local sections $Y,Z$ of $\T|_U$.
\end{lemma}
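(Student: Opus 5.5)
This is a pointwise identity, so I would prove it by expanding both sides in terms of $\nabla$ and brackets and checking that they agree. Since $\nabla$ is torsionless, every Lie bracket can be written as $[X,W]=\nabla_XW-\nabla_WX$. Substituting this into each of the four terms of \eqref{eq:Nijenhuis} gives
\begin{align*}
[AY,AZ]&=\nabla_{AY}(AZ)-\nabla_{AZ}(AY),\\
A[AY,Z]&=A\nabla_{AY}Z-A\nabla_Z(AY),\\
A[Y,AZ]&=A\nabla_Y(AZ)-A\nabla_{AZ}Y,\\
A^2[Y,Z]&=A^2\nabla_YZ-A^2\nabla_ZY.
\end{align*}

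Next I would use the Leibniz rule $\nabla_X(AW)=(\nabla_XA)(W)+A\nabla_XW$ wherever $A$ sits inside a covariant derivative. The first line becomes $(\nabla_{AY}A)(Z)-(\nabla_{AZ}A)(Y)+A\nabla_{AY}Z-A\nabla_{AZ}Y$. The second becomes $A\nabla_{AY}Z-A\bigl((\nabla_ZA)(Y)\bigr)-A^2\nabla_ZY$. The third becomes $A\bigl((\nabla_YA)(Z)\bigr)+A^2\nabla_YZ-A\nabla_{AZ}Y$.

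Finally I would form $A^2[Y,Z]+[AY,AZ]-A[AY,Z]-A[Y,AZ]$ and collect terms. The terms $A\nabla_{AY}Z$ and $A\nabla_{AZ}Y$ each occur once with each sign, so they cancel. The terms $A^2\nabla_YZ$ and $A^2\nabla_ZY$ from $A^2[Y,Z]$ cancel against the corresponding terms coming from the two $A[\cdot,\cdot]$ brackets. What is left is exactly
\[
(\nabla_{AY}A)(Z)-(\nabla_{AZ}A)(Y)+A\bigl((\nabla_ZA)(Y)\bigr)-A\bigl((\nabla_YA)(Z)\bigr),
\]
which is the right-hand side of \eqref{eq:nijenhuis-torsionless}. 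There is no real obstacle here; the only point needing care is keeping track of signs in the cancellation.
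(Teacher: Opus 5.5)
Your proposal is correct and follows essentially the same route as the paper: rewrite each bracket via torsionlessness, apply the Leibniz rule $\nabla_X(AW)=(\nabla_XA)(W)+A\nabla_XW$, and cancel the $A^2\nabla_YZ$, $A^2\nabla_ZY$, $A\nabla_{AY}Z$, $A\nabla_{AZ}Y$ terms in pairs. Your signs and cancellations all check out.
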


\begin{proof}
Since $\nabla$ is torsionless,
\begin{align*}
[Y,Z]&=\nabla_YZ-\nabla_ZY,\\
[AY,AZ]&=\nabla_{AY}(AZ)-\nabla_{AZ}(AY)\\
&=(\nabla_{AY}A)(Z)+A\nabla_{AY}Z-(\nabla_{AZ}A)(Y)-A\nabla_{AZ}Y,\\
[AY,Z]&=\nabla_{AY}Z-\nabla_Z(AY)=\nabla_{AY}Z-(\nabla_ZA)(Y)-A\nabla_ZY,\\
[Y,AZ]&=\nabla_Y(AZ)-\nabla_{AZ}Y=(\nabla_YA)(Z)+A\nabla_YZ-\nabla_{AZ}Y.
\end{align*}
Substituting into \eqref{eq:Nijenhuis},
\begin{align*}
N_A(Y,Z)
&=A^2\nabla_YZ-A^2\nabla_ZY\\
&\quad+(\nabla_{AY}A)(Z)+A\nabla_{AY}Z-(\nabla_{AZ}A)(Y)-A\nabla_{AZ}Y\\
&\quad-A\nabla_{AY}Z+A\bigl((\nabla_ZA)(Y)\bigr)+A^2\nabla_ZY\\
&\quad-A\bigl((\nabla_YA)(Z)\bigr)-A^2\nabla_YZ+A\nabla_{AZ}Y,
\end{align*}
in which the two terms in $A^2\nabla_YZ$, the two in $A^2\nabla_ZY$, the two in $A\nabla_{AY}Z$ and the two in $A\nabla_{AZ}Y$ cancel in pairs, leaving \eqref{eq:nijenhuis-torsionless}.
\end{proof}

Its form is analogous to the covariant Hertling--Manin identity used for F-manifolds: compare \cite{LPVG1,AL}.
\medskip
From now on in this section $(A,v)$ is a cyclic pair on $U$, $\nabla:=\nav$, and
\begin{equation}\label{eq:S-def}
        S(Y,Z):=(\nabla_YA)(Z).
\end{equation}

The following lemma highlights some useful properties of $S(Y,Z)$ and of $N_A$ using the preceding results. 
\begin{lemma}\label{lem:dnabla-A2}
Let $(A,v)$ be a cyclic pair on $U$ and let $S$ be as in \eqref{eq:S-def}. Then:
\begin{enumerate}
\item[(i)] $S(Y,Z)=S(Z,Y)$ for all local sections $Y,Z$ of $\T|_U$;
\item[(ii)] $N_A(Y,Z)=S(AY,Z)-S(AZ,Y)$;
\item[(iii)] $d_{\nabla}(A^2)=-N_A$.
\end{enumerate}
\end{lemma}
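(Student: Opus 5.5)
The three items follow one after another from the defining property $d_{\nabla}A=0$ of $\nabla=\nav$ (Theorem~\ref{thm:E1}) together with Lemma~\ref{lem:nijenhuis-torsionless}. No cyclicity argument beyond the existence of $\nav$ is needed.

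For (i), we use formula \eqref{eq:dnablaA}: $(d_\nabla A)(Y,Z)=(\nabla_YA)(Z)-(\nabla_ZA)(Y)=S(Y,Z)-S(Z,Y)$. This vanishes identically by \eqref{eq:E1-characterization}, so $S$ is symmetric.

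For (ii), we substitute the definition \eqref{eq:S-def} into the covariant formula \eqref{eq:nijenhuis-torsionless}. That gives
\[
N_A(Y,Z)=S(AY,Z)-S(AZ,Y)+A\bigl(S(Z,Y)\bigr)-A\bigl(S(Y,Z)\bigr).
\]
By (i) the last two terms cancel, which leaves $N_A(Y,Z)=S(AY,Z)-S(AZ,Y)$.

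For (iii), we apply the Leibniz rule for the covariant derivative of a composition:
\[
(\nabla_Y A^2)(Z)=(\nabla_YA)(AZ)+A\bigl((\nabla_YA)(Z)\bigr)=S(Y,AZ)+A\,S(Y,Z).
\]
Antisymmetrizing in $Y,Z$ and using \eqref{eq:dnablaA} for $A^2$, we get
\[
(d_\nabla A^2)(Y,Z)=S(Y,AZ)-S(Z,AY)+A\bigl(S(Y,Z)-S(Z,Y)\bigr).
\]
The last bracket vanishes by (i). Using the symmetry once more, $S(Y,AZ)=S(AZ,Y)$ and $S(Z,AY)=S(AY,Z)$, so $(d_\nabla A^2)(Y,Z)=S(AZ,Y)-S(AY,Z)=-N_A(Y,Z)$ by (ii).

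None of these steps is a real obstacle. The only points that need care are sign bookkeeping and keeping track of which slot of $S$ is the differentiating direction, so that the symmetry from (i) is invoked correctly when swapping arguments in (ii) and (iii).
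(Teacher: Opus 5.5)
Your proposal is correct and follows the paper's proof essentially line for line: (i) from $d_{\nabla}A=0$, (ii) by substituting $S$ into the covariant formula of Lemma~\ref{lem:nijenhuis-torsionless} and cancelling the last two terms by symmetry, and (iii) via the Leibniz rule $(\nabla_Y A^2)(Z)=S(Y,AZ)+A\,S(Y,Z)$, antisymmetrization, and (i)--(ii). Your remark that only torsion-freeness and $d_{\nabla}A=0$ are used, with no further appeal to cyclicity, is also accurate.
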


\begin{proof}
(i) By \eqref{eq:E1-characterization} and \eqref{eq:dnablaA}, \[0=(d_{\nabla}A)(Y,Z)=(\nabla_YA)(Z)-(\nabla_ZA)(Y)=S(Y,Z)-S(Z,Y).\]

(ii) By (i), the last two terms of \eqref{eq:nijenhuis-torsionless} are $A\bigl(S(Z,Y)-S(Y,Z)\bigr)=0$, and the first two are $S(AY,Z)-S(AZ,Y)$.

(iii) Since $\nabla_Y(A^2)=(\nabla_YA)A+A(\nabla_YA)$, one has \[\bigl(\nabla_Y(A^2)\bigr)(Z)=S(Y,AZ)+A\bigl(S(Y,Z)\bigr),\] thus by \eqref{eq:dnablaA} and (i)
\begin{eqnarray*}
        \bigl(d_{\nabla}(A^2)\bigr)(Y,Z)
        &=&S(Y,AZ)+A\bigl(S(Y,Z)\bigr)-S(Z,AY)-A\bigl(S(Z,Y)\bigr)\\
        &=&S(Y,AZ)-S(Z,AY).
\end{eqnarray*}
Using (i) $S(Y,AZ)-S(Z,AY)=S(AZ,Y)-S(AY,Z)=-N_A(Y,Z)$ where the last equality holds by (ii).
\end{proof}

Let us remark that point (iii) of Lemma~\ref{lem:dnabla-A2} will be essential for the proof of the conjecture. 
Using point (iii) of Lemma~\ref{lem:dnabla-A2} and Theorem~\ref{thm:E2},  with $B=A^2$ we get immediately the following characterization of $N_A$:

\begin{corollary}\label{cor:NA}
The Nijenhuis torsion of $A$ is exactly the obstruction for $u_{\tau}=(A(u))^2u_x$ to be a symmetry of $u_t=A(u)u_x.$
\end{corollary}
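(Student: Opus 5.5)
The plan is to obtain Corollary~\ref{cor:NA} by putting $B=A^2$ in Theorem~\ref{thm:E2} and reading the result through Lemma~\ref{lem:dnabla-A2}(iii). Since we work locally, we first fix a cyclic pair $(A,v)$ on an open set $U$ and write $\nabla:=\nav$. Such a pair exists near any point where $A$ is $\mathfrak{gl}$-regular, by Lemma~\ref{lem:glregular}.

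\textbf{Step 1: apply Theorem~\ref{thm:E2} to $B=A^2$.} This $B$ commutes with $A$ trivially. The commutator condition in part (ii) of Theorem~\ref{thm:E2} is therefore automatic. Hence the flows of $u_t=A(u)u_x$ and $u_\tau=(A(u))^2u_x$ commute if and only if $d_\nabla(A^2)=0$.

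\textbf{Step 2: use Lemma~\ref{lem:dnabla-A2}(iii).} That lemma gives $d_\nabla(A^2)=-N_A$. So commutativity of the flows is equivalent to $N_A=0$ on $U$. This settles the ``vanishes iff'' reading of the statement.

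\textbf{Step 3: upgrade to the ``exact obstruction'' reading.} To justify calling $N_A$ \emph{exactly} the obstruction, and not just a quantity whose zero set is right, I would revisit the general criterion \eqref{eq:general-criterion} with $(P,Q)=(A,A^2)$. Because $d_\nabla A=0$, the failure of that criterion is measured by
\[
(d_\nabla A^2)(Y,AZ)+(d_\nabla A^2)(Z,AY)=-\bigl(N_A(Y,AZ)+N_A(Z,AY)\bigr).
\]
So the defect tensor is a linear expression in $N_A$. By Lemma~\ref{lem:skew-rigidity}, applied pointwise with the cyclic vector $v(p)$, this defect vanishes at $p$ exactly when $(N_A)_p=0$. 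The defect therefore faithfully encodes $N_A$ point by point, with no loss of information.

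\textbf{Step 4: independence of $v$.} The equivalence does not depend on the chosen cyclic field $v$. Indeed, $N_A$ is intrinsic to $A$, and commutativity of the two flows is intrinsic to the pair of systems; Corollary~\ref{cor:E2-independence} expresses the same fact on the connection side.

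The main obstacle is Step 3: making precise that the obstruction equals $N_A$, rather than merely vanishing simultaneously with it. The identity in Lemma~\ref{lem:dnabla-A2}(iii) holds globally on $U$, which is what makes this possible. Its only inputs are $d_\nabla A=0$, equivalently the symmetry of $S$, and torsion-freeness of $\nabla$; the condition $\nabla v=0$ is not used at all.
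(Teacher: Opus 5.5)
Your proposal is correct, and Steps 1--2 are exactly the paper's argument: take $B=A^2$ in Theorem~\ref{thm:E2}, note that $[A,A^2]=0$ holds trivially, and use $d_{\nabla}(A^2)=-N_A$ from Lemma~\ref{lem:dnabla-A2}(iii). Step 3 is valid but is not a real obstacle. It only re-runs the proof of Theorem~\ref{thm:E2} (criterion \eqref{eq:general-criterion} followed by Lemma~\ref{lem:skew-rigidity}), while the tensor identity $d_{\nabla}(A^2)=-N_A$ from Step 2 already is the ``exact obstruction'' statement.
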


By Theorem \ref{thm:E2}, an operator field $K_j$ for which the systems $u_t=A(u)u_x$ and $u_{t_j}=K_j(u)u_x$ have commuting flows is of the form $K_j=c_0\Id+c_1A+\dots+c_{n-1}A^{n-1}$ with $c_0,\dots,c_{n-1}\in\O(U)$ (depending on $j$ of course), and satisfies $d_{\nabla}K_j=0$. 

The following lemma gives the expression for $d_{\nabla}K_j$:
\begin{lemma}\label{lem:dnablaK}
If $K_j=c_0\Id+c_1A+\dots+c_{n-1}A^{n-1}$ with $c_0,\dots,c_{n-1}\in\O(U),$ then 
\begin{equation}\label{eq:leibniz-split}
        (d_{\nabla}K_j)(Y,Z)
        =\underbrace{\sum_{k=0}^{n-1}\bigl[dc_k(Y)\,A^kZ-dc_k(Z)\,A^kY\bigr]}_{\text{coefficient part}}
        +\underbrace{\sum_{k=0}^{n-1}c_k\,\bigl(d_{\nabla}(A^k)\bigr)(Y,Z)}_{\text{power part}}\;.
\end{equation}
\end{lemma}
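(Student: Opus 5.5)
The identity \eqref{eq:leibniz-split} is a Leibniz rule for $d_{\nabla}$. My plan is to first prove the general product rule
\[
(d_{\nabla}(fB))(Y,Z)=df(Y)\,BZ-df(Z)\,BY+f\,(d_{\nabla}B)(Y,Z)
\]
for an arbitrary function $f\in\O(U)$ and operator field $B$, and then sum it over $k$.

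To verify the product rule I will use the first expression in \eqref{eq:dnablaA}:
\[
(d_{\nabla}(fB))(Y,Z)=\nabla_Y(fBZ)-\nabla_Z(fBY)-fB([Y,Z]).
\]
The ordinary Leibniz rule for the connection gives
\[
\nabla_Y(fBZ)=Y(f)\,BZ+f\,\nabla_Y(BZ),
\]
and similarly with $Y$ and $Z$ exchanged. Since $Y(f)=df(Y)$, collecting terms yields
\[
df(Y)\,BZ-df(Z)\,BY+f\bigl[\nabla_Y(BZ)-\nabla_Z(BY)-B[Y,Z]\bigr].
\]
The bracketed expression is exactly $(d_{\nabla}B)(Y,Z)$, which proves the product rule.

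Next I will note that $B\mapsto d_{\nabla}B$ is additive. This is immediate from \eqref{eq:dnablaA}, since each term there is additive in $B$.

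Finally I will apply the product rule with $f=c_k$ and $B=A^k$ for $k=0,\dots,n-1$, and add the results. The terms involving derivatives of the $c_k$ form the "coefficient part", and the terms $c_k\,d_{\nabla}(A^k)$ form the "power part", giving \eqref{eq:leibniz-split}. For $k=0$ the convention $A^0=\Id$ applies; its power-part contribution vanishes because $d_{\nabla}\Id(Y,Z)=\nabla_YZ-\nabla_ZY-[Y,Z]=0$ by torsion-freeness. This vanishing is not needed for the identity as stated, but it is worth recording for later use.

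There is no real obstacle here. The only point requiring attention is the sign bookkeeping in the antisymmetrization, that is, making sure that $df(Z)\,BY$ enters with a minus sign.
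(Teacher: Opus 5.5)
Your proposal is correct and follows essentially the same route as the paper: a Leibniz rule for each summand $c_kA^k$, summed over $k$. The only cosmetic difference is that the paper applies the Leibniz rule to $\nabla_Y(cP)$ and then antisymmetrizes using the second expression in \eqref{eq:dnablaA}, whereas you apply it directly to the first expression for $d_{\nabla}(fB)$. Both bookkeepings are equally valid.
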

\begin{proof}
For a function $c\in\O(U)$ and an operator field $P$ on $U$, the covariant derivative of the operator field $cP$ obeys the Leibniz rule
\begin{eqnarray*}
        \bigl(\nabla_Y(cP)\bigr)(Z)
        &=&\nabla_Y\bigl(c\,PZ\bigr)-c\,P(\nabla_YZ)\\
        &=&(Yc)\,PZ+c\bigl(\nabla_Y(PZ)-P(\nabla_YZ)\bigr)\\
        &=&dc(Y)\,PZ+c\,\bigl(\nabla_YP\bigr)(Z),
\end{eqnarray*}
where $dc(Y)=Yc$. Applying this to each summand of $K_j=\sum_{k=0}^{n-1}c_kA^k$ and summing over $k$,
\[
        \bigl(\nabla_YK_j\bigr)(Z)
        =\sum_{k=0}^{n-1}dc_k(Y)\,A^kZ+\sum_{k=0}^{n-1}c_k\,\bigl(\nabla_Y(A^k)\bigr)(Z).
\]
By \eqref{eq:dnablaA}, $(d_{\nabla}K_j)(Y,Z)=(\nabla_YK_j)(Z)-(\nabla_ZK_j)(Y)$, so subtracting the same expression with $Y$ and $Z$ interchanged gives equation \eqref{eq:leibniz-split}.
\end{proof}

\medskip 

The power part contains the Nijenhuis torsion of $A$: its terms $k=0,1$ vanish, since $d_{\nabla}\Id=0$ and $d_{\nabla}A=0$, and its term $k=2$ is $-c_2N_A$ by Lemma \ref{lem:dnabla-A2}(iii). The coefficient part is, at each point, exactly a tensor of the form \eqref{eq:Lambda-form} of the next lemma, with covectors $\alpha_k=dc_k$. The lemma is a purely algebraic cancellation,  and it shows that every such tensor is annihilated by the operation that produces the Haantjes torsion from the Nijenhuis torsion in \eqref{eq:Haantjes}.  This is one of the mechanisms behind the proof of Theorem \ref{thm:conjecture}.

\begin{lemma}\label{lem:cancellation}
Let $F$ be an $n$-dimensional vector space over $\K$, let $A\in\End(F)$ and let $\alpha_0,\dots,\alpha_{n-1}\in F^*$. Define $N\in\Lambda^2F^*\otimes F$ by
\begin{equation}\label{eq:Lambda-form}
        N(Y,Z):=\sum_{k=0}^{n-1}\bigl[\alpha_k(Y)\,A^kZ-\alpha_k(Z)\,A^kY\bigr],
        \qquad Y,Z\in F .
\end{equation}
Then, the Haantjes-ization of $N$ vanishes identically, that is 
\begin{equation}\label{eq:cancellation}
        A^2N(Y,Z)+N(AY,AZ)-A\,N(AY,Z)-A\,N(Y,AZ)=0
        \qquad\forall\,Y,Z\in F .
\end{equation}
\end{lemma}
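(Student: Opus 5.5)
By linearity of the left-hand side of \eqref{eq:cancellation} in $N$, it suffices to treat a single summand
\[
N_k(Y,Z):=\alpha(Y)\,BZ-\alpha(Z)\,BY,
\]
where $\alpha\in F^*$ is arbitrary and $B:=A^k$. The one structural fact needed is that $B$ commutes with $A$. The map $N\mapsto A^2N(Y,Z)+N(AY,AZ)-AN(AY,Z)-AN(Y,AZ)$ is $\K$-linear in $N$, so the vanishing for each summand gives the vanishing for $N$.

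For the single summand, expand the four terms and group them according to which covector value appears. The terms are:
\begin{align*}
A^2N_k(Y,Z)&=\alpha(Y)A^2BZ-\alpha(Z)A^2BY,\\
N_k(AY,AZ)&=\alpha(AY)BAZ-\alpha(AZ)BAY,\\
-A\,N_k(AY,Z)&=-\alpha(AY)ABZ+\alpha(Z)ABAY,\\
-A\,N_k(Y,AZ)&=-\alpha(Y)ABAZ+\alpha(AZ)ABY.
\end{align*}

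Now use $AB=BA$ to match terms. The $\alpha(Y)$ terms are $A^2BZ-ABAZ=0$. The $\alpha(Z)$ terms are $-A^2BY+ABAY=0$. The $\alpha(AY)$ terms are $BAZ-ABZ=0$. The $\alpha(AZ)$ terms are $-BAY+ABY=0$. Hence every summand is annihilated, and summing over $k$ proves \eqref{eq:cancellation}.

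No step is a real obstacle. The only point needing care is the bookkeeping: each of the four covector values $\alpha(Y)$, $\alpha(Z)$, $\alpha(AY)$, $\alpha(AZ)$ occurs in exactly two terms, with opposite signs once $A$ and $B$ are commuted. In particular, neither cyclicity nor $n\geq3$ is needed, consistent with the statement being purely algebraic.
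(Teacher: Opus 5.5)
Your proof is correct and follows the paper's argument: reduce by additivity to a single summand $\alpha(Y)A^kZ-\alpha(Z)A^kY$, expand the four terms, and cancel them in four pairs using only that $A^k$ commutes with $A$ (the paper phrases this as $A\,A^k=A^{k+1}$). Your remark that the same computation works for any $B$ commuting with $A$ in place of $A^k$ is a harmless slight generalization.
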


\begin{proof}
The left-hand side of \eqref{eq:cancellation} is additive in $N$, so it suffices to prove the identity when $N$ is a single summand of \eqref{eq:Lambda-form}: fix $k\in\{0,\dots,n-1\}$, write $\alpha:=\alpha_k$, and set $N(Y,Z):=\alpha(Y)A^kZ-\alpha(Z)A^kY$. Using $A\,A^k=A^{k+1}$,
\begin{align*}
A^2N(Y,Z)&=\alpha(Y)\,A^{k+2}Z-\alpha(Z)\,A^{k+2}Y,\\
N(AY,AZ)&=\alpha(AY)\,A^{k+1}Z-\alpha(AZ)\,A^{k+1}Y,\\
A\,N(AY,Z)&=\alpha(AY)\,A^{k+1}Z-\alpha(Z)\,A^{k+2}Y,\\
A\,N(Y,AZ)&=\alpha(Y)\,A^{k+2}Z-\alpha(AZ)\,A^{k+1}Y .
\end{align*}
In the combination $A^2N(Y,Z)+N(AY,AZ)-A\,N(AY,Z)-A\,N(Y,AZ)$ the eight terms on the right-hand sides cancel in four pairs.  Therefore \eqref{eq:cancellation} holds for each summand, and the general case follows by summing over $k$.
\end{proof}

\medskip

The next engine in the proof of Theorem~\ref{thm:conjecture} is the following.  Since in the formula \eqref{eq:leibniz-split}, the coefficient part gives a vanishing Haantjes tensor,  we would like the power part to have only the Nijenhuis torsion of $A$.  Of course,  this is not feasible on an open set in general, but for each fixed point $p$, we can select a linear combination with constant coefficients of the symmetries $K_j$, call it $\tilde K,$ such that  the power part of $d_{\nabla} \tilde K$ evaluated at $p$ is just $(d_{\nabla}A^2)_p=-(N_A)_p.$ This is the content of the next lemma. 

\begin{lemma}\label{lem:newbasisK}
Assume $n\geq3$, let $(A,v)$ be a cyclic pair on $U$ and let $\nabla:=\nav$ be its natural connection.  Then for each point $p\in U$ there exist unique constants $a_1, \dots, a_n$, depending on $p$, such that the symmetry $\tilde K:=\sum_{j=1}^n a_j K_j$ satisfies $\tilde K_p=A^2_p$.  Consequently,  the power part of $d_{\nabla} \tilde K$ evaluated at $p$ in formula \eqref{eq:leibniz-split} is just $(d_{\nabla}A^2)_p=-(N_A)_p.$
\end{lemma}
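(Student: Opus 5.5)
The plan is to work pointwise at $p$ and reduce everything to linear algebra in $\K[A_p]$. By Lemma~\ref{lem:AK-compatible} and Theorem~\ref{thm:E2}, each $K_j$ commutes with $A$ and can be written as $K_j=\sum_{k=0}^{n-1}g^{(j)}_kA^k$ on $U$ with $g^{(j)}_k\in\O(U)$. Evaluating at $p$, the operators $(K_1)_p,\dots,(K_n)_p$ all lie in $\Cyc(A_p)=\K[A_p]$, which is $n$-dimensional by Lemma~\ref{lem:cyclic-linear-algebra}. The hypothesis of the conjecture says that $K_1,\dots,K_n$ are linearly independent at every point, so the $(K_j)_p$ are $n$ linearly independent elements of an $n$-dimensional space and therefore form a basis of $\K[A_p]$. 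Equivalently, the $n\times n$ matrix $G(p):=\bigl(g^{(j)}_k(p)\bigr)$ is invertible.

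Since $n\geq3$, $A_p^2$ is one of the basis elements $\Id,A_p,\dots,A_p^{n-1}$ of $\K[A_p]$, so in particular $A_p^2\in\K[A_p]$. It therefore has a unique expansion $A_p^2=\sum_j a_j(K_j)_p$ with constants $a_j\in\K$. Explicitly, $(a_1,\dots,a_n)$ is obtained by solving $G(p)^{T}a=e_2$, where $e_2$ is the coordinate vector of $t^2$ in $\K[t]_{<n}$. Uniqueness follows from the linear independence of the $(K_j)_p$. Set $\tilde K:=\sum_ja_jK_j$. Because the $a_j$ are constants, $\tilde K$ is again a symmetry: by linearity of $d_\nabla$ and Lemma~\ref{lem:AK-compatible}, $[A,\tilde K]=0$ and $d_\nabla\tilde K=0$.

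For the consequence, expand $\tilde K=\sum_k c_kA^k$ with $c_k:=\sum_ja_jg^{(j)}_k\in\O(U)$, using the uniqueness statement in Theorem~\ref{thm:E2}. Then $\tilde K_p=A_p^2$ together with the linear independence of $\Id,A_p,\dots,A_p^{n-1}$ from Lemma~\ref{lem:cyclic-linear-algebra}(i) forces $c_k(p)=\delta_{k2}$. The power part of \eqref{eq:leibniz-split} at $p$ is $\sum_kc_k(p)\bigl(d_\nabla(A^k)\bigr)_p$, which therefore equals $\bigl(d_\nabla(A^2)\bigr)_p$, and this is $-(N_A)_p$ by Lemma~\ref{lem:dnabla-A2}(iii). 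The coefficients $c_k$ are not constant, so their differentials still appear in the coefficient part; only the power part is simplified.

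I expect the only delicate step to be the first one: making sure the linear independence hypothesis applies at the specific point $p$, and that it is independence over $\K$ of the values $(K_j)_p$ and not merely over functions. This is exactly what the conjecture's hypothesis provides, so there is no real obstacle. The hypothesis $n\geq3$ is used only to guarantee that $A^2$ is a coordinate direction distinct from $\Id$ and $A$ in the basis. Otherwise the condition $c_k(p)=\delta_{k2}$ would not be available, since for $n=2$ the operator $A^2$ is not an element of the basis $\Id,A$ and reduces to a combination of them.
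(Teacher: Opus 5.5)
Your proposal is correct and follows the paper's proof essentially step by step. Both arguments use that the $(K_j)_p$ lie in $\Cyc(A_p)=\K[A_p]$ and, being $n$ pointwise independent elements, form a basis in which $A_p^2$ has a unique expansion. Both then expand $\tilde K=\sum_k c_kA^k$ via Theorem~\ref{thm:E2} and use the independence of $\Id,A_p,\dots,A_p^{n-1}$ (this is where $n\geq3$ enters) to get $c_k(p)=\delta_{k2}$, so that the power part at $p$ is $(d_\nabla A^2)_p=-(N_A)_p$. Your matrix $G(p)$ is just a coordinate version of the same argument. One small point: $A_p^2\in\K[A_p]$ holds for every $n$, so $n\geq3$ is needed for $c_k(p)=\delta_{k2}$, not for the existence of the $a_j$.
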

\begin{proof} Fix $p\in U$.  The $\K$-algebra $\K[A_p]$ is a vector space of dimension $n$ over $\K$ with basis $\Id_p,  A_p, \dots, A_p^{n-1}$ by Lemma~\ref{lem:cyclic-linear-algebra}(i).  Since $(K_j)_p$ belongs to the centralizer of $A_p$ for all $j$, it belongs to $\K[A_p]$ by Lemma~\ref{lem:cyclic-linear-algebra}(iii). By assumption, the $K_j$'s are linearly independent at each point,  and since there are $n$ of them,  they form at $p$ another basis of $\K[A_p]$.  Thus there exists a unique linear combination with constant coefficients $a_1, \dots,  a_n$ such that $\tilde K:=\sum_{j=1}^n a_j K_j$ satisfies $\tilde K_p=A^2_p$.

Write now $\tilde K$ in terms of powers of $A$ on $U$, that is $\tilde K=\sum_{i=0}^{n-1}c_iA^i$ with coefficient functions $c_i\in \O(U)$, as is possible by the last assertion of Theorem~\ref{thm:E2}.  Since $n\geq3$, $A^2_p$ is itself one of the basis elements $\Id_p,A_p,\dots,A^{n-1}_p$ of $\K[A_p]$, so that the uniqueness of the expansion of $\tilde K_p=A^2_p$ in that basis forces
\[
        c_i(p)=\delta_{2i},\qquad i=0,\dots,n-1 .
\]
  Substituting $c_i(p)=\delta_{2i}$ in the power part of \eqref{eq:leibniz-split} and using $d_{\nabla}\Id=0$, $d_{\nabla}A=0$ and Lemma~\ref{lem:dnabla-A2}(iii) leaves exactly $(d_{\nabla}A^2)_p=-(N_A)_p$, which is the last assertion.
\end{proof}

\medskip 

We are now ready to prove Theorem~\ref{thm:conjecture} and thus the Conjecture \ref{BKMc}: 

\medskip

\begin{proof} Without loss of generality,  we can assume $n\geq3$. Under this assumption $A^2_q$ is always an element of the basis $\Id_q,  A_q, A^2_q, \dots, A_q^{n-1} $ of $\K[A_q]$, at every point $q$ at which $A$ is $\mathfrak{gl}$-regular.

Since $A$ is $\mathfrak{gl}$-regular at $p$,  by Lemma~\ref{lem:glregular} there exist an open neighborhood $U$ of $p$ and a vector field $v$ on $U$ such that $(A,v)$ is a cyclic pair on $U$, with associated torsionless connection $\nabla:=\nabla^{(A, v)}$.  Thus $A$ is $\mathfrak{gl}$-regular at every point of $U$.  By Lemma~\ref{lem:AK-compatible}, $[A,K_j]=0$ and $d_{\nabla}K_j=0$ on $U$, for every $j$.  

Fix now an arbitrary point $q\in U$ and take the constant coefficient linear combination $\tilde K=\sum_{j=1}^n a_jK_j$ provided by Lemma~\ref{lem:newbasisK} at $q$.  Write $\tilde K=\sum_{i=0}^{n-1}c_iA^i$ for coefficient functions $c_i\in \O(U)$,  so that $c_i(q)=\delta_{2i}$. Again, since this is a constant coefficient linear combination of the $K_j$'s, we have $d_{\nabla} \tilde K=0$ on $U$.   Evaluating $d_{\nabla}\tilde K=0$ at $q$,  using \eqref{eq:leibniz-split} together with Lemma~\ref{lem:newbasisK},  we get 
\begin{equation}\label{eq:important}(N_A)|_q(Y,Z)=\sum_{k=0}^{n-1}\bigl[dc_k|_q(Y)\,A_q^kZ-dc_k|_q(Z)\,A_q^kY\bigr],  \quad \forall Y,Z\in T_qM.\end{equation}
Therefore $N_A|_q$ is of the form \eqref{eq:Lambda-form}, with $F=T_qM$, the operator $A_q$ and the covectors $\alpha_k=dc_k|_q$. Since $N_A$ is a $(1,2)$-tensor field, each of the four terms on the right-hand side of \eqref{eq:Haantjes} is $\O$-bilinear in $(Y,Z)$,  therefore the value of $H_A$ at $q$ is, for all $ Y,Z\in T_qM$, $H_A|_q(Y,Z)=$
\[
        A_q^2\,N_A|_q(Y,Z)+N_A|_q(A_qY,A_qZ)-A_q\,N_A|_q(A_qY,Z)-A_q\,N_A|_q(Y,A_qZ),
\]
which vanishes by Lemma \ref{lem:cancellation}. Since $q\in U$ is arbitrary, $H_A=0$ on the neighborhood $U$ of $p$.

The two remaining assertions follow at once.  If $A$ is $\mathfrak{gl}$-regular at every point of $M$, then $M$ is covered by such neighborhoods and $H_A=0$ on $M$.  If $A$ is $\mathfrak{gl}$-regular at the points of a dense subset $D\subseteq M$, then $H_A$ vanishes on the union of the corresponding neighborhoods, which is an open set containing $D$ and hence dense in $M$.  Since $H_A$ is a tensor field, and therefore continuous, $H_A=0$ on $M$.
\end{proof}

\medskip

Combining Theorem~\ref{thm:conjecture} with Proposition~\ref{prop:reduction} proves the conjecture of Bolsinov, Konyaev and Matveev.  Moreover, Theorem~\ref{thm:conjecture} and its proof provide immediately the following corollary:
\begin{corollary}\label{cor:system}
Let $u_t=A(u)u_x$ be a system of hydrodynamic type equipped with a symmetry $u_{\tau}=B(u)u_x$, and assume $n\geq3$ (for $n=2$ the conclusion is automatic by Observation~\ref{rmk:dim2-haantjes}).  If at a point $p$ the operator $A_p$ is $\mathfrak{gl}$-regular and at the same point $B_p=A^2_p$,  then $(H_A)_p=0.$
\end{corollary}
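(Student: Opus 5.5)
The plan is to run the proof of Theorem~\ref{thm:conjecture} again, this time with the single symmetry $B$ in place of the constant-coefficient combination $\tilde K$ produced by Lemma~\ref{lem:newbasisK}.

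First, since $A_p$ is $\mathfrak{gl}$-regular, Lemma~\ref{lem:glregular} gives a neighborhood $U$ of $p$ and a vector field $v$ such that $(A,v)$ is a cyclic pair on $U$. Let $\nabla:=\nav$ be the natural connection of Theorem~\ref{thm:E1}. Because $u_\tau=B(u)u_x$ is a symmetry, Theorem~\ref{thm:E2} (i)$\Rightarrow$(ii) yields $[A,B]=0$ and $d_\nabla B=0$ on $U$. The last assertion of that theorem then lets us write $B=\sum_{k=0}^{n-1}c_kA^k$ with $c_k\in\O(U)$.

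Next we evaluate at $p$. Since $n\geq3$, $A_p^2$ is one of the basis elements $\Id_p,A_p,\dots,A_p^{n-1}$ of $\K[A_p]$, so by Lemma~\ref{lem:cyclic-linear-algebra}(i) the hypothesis $B_p=A_p^2$ forces $c_k(p)=\delta_{2k}$. We expand $d_\nabla B$ with Lemma~\ref{lem:dnablaK}. In the power part at $p$, the $k=0,1$ terms vanish because $d_\nabla\Id=0$ and $d_\nabla A=0$. The terms with $k\geq3$ carry the factor $c_k(p)=0$. The $k=2$ term equals $(d_\nabla A^2)_p=-(N_A)_p$ by Lemma~\ref{lem:dnabla-A2}(iii). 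Setting $(d_\nabla B)_p=0$ therefore gives
\[
(N_A)_p(Y,Z)=\sum_{k=0}^{n-1}\bigl[dc_k|_p(Y)\,A_p^kZ-dc_k|_p(Z)\,A_p^kY\bigr],
\]
which is exactly of the form \eqref{eq:Lambda-form} with $\alpha_k=dc_k|_p$.

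Finally, $H_A$ is built tensorially from $N_A$ and $A$ via \eqref{eq:Haantjes}, so $(H_A)_p$ is the Haantjes-ization of $(N_A)_p$ by $A_p$. Lemma~\ref{lem:cancellation} shows that this vanishes.

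There is essentially no obstacle here. The one delicate point is that the coefficients $c_k$ are genuine functions on $U$, not constants. Their derivatives at $p$ enter only through the coefficient part of \eqref{eq:leibniz-split}, and that part is killed by Lemma~\ref{lem:cancellation}. So we do not need any pointwise vanishing of the $dc_k$, only the pointwise values $c_k(p)$. The argument is also strictly pointwise: nothing is claimed away from $p$, because $B_q=A_q^2$ is assumed only at $q=p$.
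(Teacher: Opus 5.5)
Your proposal is correct and takes the same approach as the paper, which obtains the corollary immediately from the proof of Theorem~\ref{thm:conjecture}. There the symmetry $B$ plays the role of the combination $\tilde K$ of Lemma~\ref{lem:newbasisK}, with $d_{\nabla}B=0$ now coming directly from Theorem~\ref{thm:E2}, and the conclusion at $p$ follows from \eqref{eq:leibniz-split}, Lemma~\ref{lem:dnabla-A2}(iii) and Lemma~\ref{lem:cancellation}.
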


Some concluding comments are in order.  In \cite{BKMint} the conjecture is established in the case in which the $\mathfrak{gl}$-regular combination $\sum_ic_iK_i$ has $n$ distinct real (or complex) eigenvalues, in every dimension, with the stronger conclusion that the $K_i$ are simultaneously diagonalizable, that is that Riemann invariants exist (\cite[Theorem 1]{BKMint}). The vanishing of the Haantjes torsion alone is moreover verified by computer algebra when $K_i$ is conjugate at every point to a single nilpotent Jordan block, in dimensions $n=3,\dots,10$, and when the Jordan form of $K_i$ has two blocks with distinct eigenvalues, in dimensions $n=3,\dots,7$. Theorem \ref{thm:conjecture} completely disposes of these restrictions.  
Theorem \ref{thm:conjecture} does not assume that $p$ is algebraically generic: the proof uses only Theorem \ref{thm:E1} and Theorem \ref{thm:E2}, which hold for an arbitrary cyclic pair, and by Lemma \ref{lem:glregular} cyclic pairs exist near every point at which $A$ is $\mathfrak{gl}$-regular.

The notion of integrability used here is  the existence of $n$ mutually commuting symmetries which are linearly independent at every point.  The mutual commutativity need not be assumed, since it is automatic here.  

\section{Appendix 1: cyclic pairs and the generalized  hodograph method}
In this paper  we  have shown that integrable systems of hydrodynamic type, under mild assumptions, coincide with F-systems satisfying  the integrability  condition \eqref{shc-intro}.  For F-systems (not necessarily integrable) a suitable extension of Tsarev's integration procedure (generalized hodograph method)  has been introduced in  \cite{LPVG2}. In the same paper, the method was applied to prove the solvability of  the Cauchy problem for Darboux-Tsarev systems and in \cite{AL} to prove the solvability of  the Cauchy problem for analytic integrable cyclic F-systems. The idea of the method, starting from the seminal paper of Tsarev \cite{Tsarev}, is to exploit symmetries (commuting flows) to construct solutions. In this  Appendix we show that the same idea works also for general (even not integrable) systems of hydrodynamic type $u_t=A(u)u_x$ assuming the existence of a vector field $v$ defining a cyclic pair $(A,v)$.  Suppose that $u_\tau=B(u)u_x$ is a commuting  flow ($u_{t\tau}=u_{\tau t}$), define  $X=Av$, $Y=Bv$ and assume that the $(1,1)$-tensor field $\tilde{M}$ of components 
\[\tilde{M}=t\nabla X-\nabla Y=t\nabla_v A-\nabla_v B\]
is locally invertible. Consider the algebraic system
\begin{equation}\label{hodographM}
 xI+tA(u)-B(u)=0,
 \end{equation}
 and the  subsystem obtained  applying both sides of \eqref{hodographM} to $v$:
\begin{equation}\label{hodograph}
 xv(u)+tX(u)-Y(u)=0.
 \end{equation}
Clearly the solutions of the system \eqref{hodographM} are also solutions  of the system \eqref{hodograph}. It is immediate to check that the converse statement
 is true due to the commutativity of $A$ and $B$ and the cyclicity of the pair $(A,v)$.
  Indeed, taking into account that $A$ commutes with the left-hand side of \eqref{hodographM}, we have
  \[(xI+tA-B)A^kv=A^k(xI+tA-B)v.\]
  This means that if \eqref{hodograph} holds true, then also \eqref{hodographM} holds true since the left hand side of \eqref{hodographM} applied to the vector fields of the frame generated by the cyclic pair $(A,v)$ vanishes. In other words, the existence of a cyclic pair allows to replace the matrix system of $n^2$ equations \eqref{hodographM} with the  vector system of $n$ equations \eqref{hodograph}. 
  
Taking the derivative of both sides of \eqref{hodograph} with  respect to $x$ and $t$ we get
\begin{equation}\label{hod1}
v+Mu_x=0,\qquad X+Mu_t=0,
\end{equation}
where $M^i_j=x\partial_jv^i+t\partial_jX^i-\partial_jY^i$. It is easy to prove that on the solutions  of the system \eqref{hodographM} $\tilde{M}=M$. Indeed
\[\tilde{M}-M=-x\nabla v=0.\]
Thus on the solutions  of the system \eqref{hodographM} we have
\begin{equation}\label{hod2}
v+\tilde{M}u_x=0,\qquad X+\tilde{M}u_t=0.
\end{equation}
Multiplying both sides of the first equation of \eqref{hod2} by $A$ we obtain
\[X+A\tilde{M}u_x=0,\qquad X+\tilde{M}u_t=0.\]
Let us prove that $A$ and $\tilde{M}$ commute. Using \eqref{hodographM} we obtain
\begin{eqnarray*}
A^i_s\tilde{M}^s_j&=&A^i_s(t\nabla_v A^s_j-\nabla_v B^s_j)=(B^i_s-x\delta^i_s)\nabla_v A^s_j-A^i_s\nabla_v B^s_j\\
\tilde{M}^i_sA^s_j&=&(t\nabla_v A^i_s-\nabla_v B^i_s)A^s_j=\nabla_v A^i_s(B^s_j-x\delta^s_j)-\nabla_v B^i_sA^s_j.
\end{eqnarray*}
This implies
\[[\tilde{M},A]=\nabla_v[A,B]=0.\]
We can conclude that, on the solutions of the system \eqref{hodographM}
\[X+\tilde{M}Au_x=0,\qquad X+\tilde{M}u_t=0\]
and thus
\[\tilde{M}(u_t-Au_x)=0,\]
that  implies $u_t=A(u)u_x$ due to invertibility of $\tilde{M}$. In the case $A=\circ$ where $\circ$  satisfies Hertling-Manin condition the  above procedure coincides with the procedure discussed in \cite{LPVG2,AL}  with $v$ given by the unit vector field.  According to the results of the present paper, in the integrable case and in a neighborhood of an algebraically generic point, if the tensor field of type $(1,1)$ $A$ is $\mathfrak{gl}$-regular this is always the case.

\section{Appendix 2: Representability via an F-product}
Fix an open connected set $U$ of $M$.  Given a cyclic pair $(A,v)$ on $U$,  it is immediate to see that it induces a commutative associative product $\circ_v$ on $\mathfrak{X}(U)$.  Indeed, since any $X\in \mathfrak{X}(U)$ can be written uniquely as $X=\sum_{k=0}^{n-1}f_k A^k v$, $f_k \in \mathcal{O}_M(U)$,  we can define the multiplication operator by $X$ as 
\[ p_X(A):=X\circ_v=\sum_{k=0}^{n-1}f_k A^k,\]
where $p_X(A)$ is precisely the polynomial in $A$ associated with $X$ by Lemma \ref{lem:cyclic-linear-algebra}.
It is clear that $X\circ_v v=X$ and that \[X\circ_v Y=\sum_{k=0}^{n-1}f_k A^k\sum_{j=0}^{n-1}g_j A^j v=\sum_{j=0}^{n-1}g_j A^j\sum_{k=0}^{n-1}f_k A^k v=Y\circ_v X.\]
Associativity follows from the associativity of composition of powers of $A$.

Recall that a commutative associative product on vector fields with unit $e$ satisfies the Hertling–Manin condition iff (definition)
\begin{equation}\label{eq:HM}
        \mathcal{L}_{X\circ Y}(\circ)=X\circ\mathcal{L}_Y(\circ)+Y\circ\mathcal{L}_X(\circ)
        \qquad\text{for all }X,Y ,
\end{equation}
where
\begin{equation}\label{eq:LieDerprod}\mathcal{L}_U(\circ)(Z,W):=[U,Z\circ W]-[U,Z]\circ W-Z\circ[U,W].\end{equation}

For a cyclic pair $(A,v)$ with canonical product $\circ_v$ and unit $v$, we call $v$ a {\bf good unit} if $\circ_v$ satisfies the Hertling–Manin condition. If this is the case,  $\circ_v$ is called an {\bf F-product}.

It is natural to ask whether, given a $\mathfrak{gl}$-regular operator $A$ on an open connected set $U$, there always exists a cyclic vector $v$ such that $\circ_v$ is an F-product.  We show with a simple example discussed in detail that this is not always the case,  even if further conditions are imposed on $A$.  Indeed,  we are going to present a $\mathfrak{gl}$-regular operator $A$ over all $\K^2$ (with coordinates $(u^1, u^2)$),  with vanishing Haantjes tensor,  such that the system $u_t=A(u)u_x$ has two symmetries  $B_1,  B_2$ that are pointwise linearly independent everywhere,  and yet if $U$ is any connected open neighborhood that intersects $\Sigma:=\{u^2=0\},$ then over $U$ $A$ does not admit any cyclic vector that is also a good unit. 

Consider the system
\begin{equation}\label{eq:system}
        u^1_t=u^2_x,
        \qquad
        u^2_t=u^2u^1_x,
        \qquad\text{that is}\qquad
        A=\begin{pmatrix}0&1\\u^2&0\end{pmatrix} ,
\end{equation}
so that
\begin{equation}\label{eq:frame}
        A\partial_1=u^2\,\partial_2,
        \qquad
        A\partial_2=\partial_1,
        \qquad
        A^2=u^2\,\Id.
\end{equation}
The eigenvalues are $\pm\sqrt{u^2}$.  They are real and distinct when $u^2>0$ and complex when $u^2<0$.  The following highlights the properties of $A$:

\begin{proposition}
The operator field \eqref{eq:system} is $\mathfrak{gl}$-regular everywhere, has vanishing Haantjes torsion, and $\Id$ together with $A$ are pointwise independent symmetries of $u_t=Au_x$, with $A$ expressible as a constant-coefficient linear combination of them.
\end{proposition}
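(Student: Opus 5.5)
The proposition has four assertions, and I will verify each one directly, in order, from the explicit form \eqref{eq:system} and the relations \eqref{eq:frame}.

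\emph{$\mathfrak{gl}$-regularity.} Take $v=\partial_2$. By \eqref{eq:frame} we have $Av=\partial_1$, so the pair $v,Av$ equals $\partial_2,\partial_1$. This is a frame at every point of $\K^2$. Hence $(A,\partial_2)$ is a cyclic pair on all of $\K^2$, and $A$ is $\mathfrak{gl}$-regular everywhere in the sense of Definition~\ref{def:cyclic-pair}. The same conclusion follows from Lemma~\ref{lem:glregular}(ii): the characteristic polynomial is $t^2-u^2$, and $A$ is never a multiple of the identity because its $(1,2)$ entry equals $1$. This is consistent with the last sentence of Observation~\ref{rmk:dim2-haantjes}.

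\emph{Vanishing Haantjes torsion.} Since $n=2$, this follows immediately from Observation~\ref{rmk:dim2-haantjes}, which says every operator field in dimension two is Haantjes.

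\emph{Symmetries.} By its definition, the relation ``$M$ is a symmetry of $L$'' requires $[L,M]=0$ and that the symmetric part of $\langle L,M\rangle$ vanish.
\begin{itemize}
\item For $\Id$: it commutes with everything. A direct computation gives $\langle L,\Id\rangle(\xi,\eta)=[L\xi,\eta]+L[\xi,\eta]-[L\xi,\eta]-L[\xi,\eta]=0$. Equivalently, $u_\tau=u_x$ is the translation flow, which commutes with any autonomous system.
\item For $A$ itself: $\langle A,A\rangle(\xi,\xi)=A[A\xi,\xi]+A[\xi,A\xi]-[A\xi,A\xi]-A^2[\xi,\xi]=0$. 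Alternatively, on $U=\K^2$ with $v=\partial_2$, condition (ii) of Theorem~\ref{thm:E2} holds for $B=A$ by Theorem~\ref{thm:E1}. Likewise $B=\Id$ satisfies (ii), since $d_\nabla\Id=0$ for a torsionless $\nabla$.
\end{itemize}
Mutual symmetry of the pair $\{\Id,A\}$ then follows, because the relation is symmetric.

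\emph{Independence and constant-coefficient combination.} $\Id$ and $A$ are pointwise linearly independent, since $A$ has off-diagonal entry $1$ everywhere. The combination is trivial: $A=0\cdot\Id+1\cdot A$, with constant coefficients.

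I do not expect any real obstacle here. The only point needing care is making the symmetry verification explicit. Routing it through Theorem~\ref{thm:E2} with the global cyclic vector $\partial_2$ keeps it short.
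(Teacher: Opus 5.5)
Your proof is correct and follows the same item-by-item verification as the paper. The differences are minor. You certify $\mathfrak{gl}$-regularity with the explicit global cyclic vector $\partial_2$ (since $A\partial_2=\partial_1$), whereas the paper uses the eigenvalue/Jordan-block description at $u^2\neq0$ and $u^2=0$; and you explicitly check $\langle A,\Id\rangle=0$ and $\langle A,A\rangle(\xi,\xi)=0$, which the paper only asserts.
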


\begin{proof}
The operator $A$ is $\mathfrak{gl}$-regular: for $u^2\neq 0$ it is semisimple with two distinct eigenvalues, while at $u^2=0$ it consists of a single Jordan block for the eigenvalue $0$. In two dimensions every operator field is automatically a Haantjes operator. Moreover, $\Id$ and $A$ are symmetries of $u_t=Au_x$ and they are pointwise independent because $A_p$ is never a scalar multiple of the identity.  Finally, $A=0\cdot\Id+1\cdot A$ is a constant-coefficient combination of these symmetries.
\end{proof}

The following theorem shows that being cyclic and being a good unit are incompatible conditions on a connected open set $U$ with $U\cap \Sigma\neq \emptyset.$

\begin{theorem}\label{thm:failure product rep}
Let $U$ be an open set that intersects $\Sigma$. Then there is no cyclic vector field $v$ for $A$ on $U$ that is also a good unit for $\circ_v$.
\end{theorem}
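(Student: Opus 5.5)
The plan is to turn the existence of a good cyclic unit into an explicit system of first-order PDEs for the components of $v$, and then to show that this system has no solution defined across $\Sigma$.

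\emph{Step 1: parametrize the candidates.} Write $v=a\,\partial_1+b\,\partial_2$ with $a,b\in\O(U)$. By \eqref{eq:frame}, $Av=b\,\partial_1+a u^2\,\partial_2$, so
\[
\det[v\;\;Av]=a^2u^2-b^2 .
\]
Cyclicity on $U$ therefore means $a^2u^2-b^2\neq0$ on $U$, and in particular $b\neq0$ on $U\cap\Sigma$. Set $e:=v$ and $w:=Av$. Since $A^2=u^2\,\Id$, the product $\circ_v$ is determined by
\[
e\circ e=e,\qquad e\circ w=w,\qquad w\circ w=u^2\,e .
\]
So $\circ_v$ is encoded by the single structure function $u^2$ in the frame $(e,w)$.

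\emph{Step 2: reduce the Hertling--Manin condition.} The left side of \eqref{eq:HM} is symmetric and $\O$-bilinear in $(X,Y)$ once the unit is fixed. Hence in dimension two it suffices to test \eqref{eq:HM} on the pairs $(e,e)$, $(e,w)$ and $(w,w)$, and to evaluate each resulting tensor on pairs of frame vectors.
\begin{itemize}
\item[(a)] The pair $(e,e)$ gives $\mathcal L_e(\circ)=0$ (the standard consequence of \eqref{eq:HM} for a unit).
\item[(b)] The pair $(w,w)$ is an identity between $\mathcal L_{u^2e}(\circ)$ and $2\,w\circ\mathcal L_w(\circ)$.
\end{itemize}
Writing $[e,w]=\beta\, e+\gamma\, w$, with $\beta,\gamma$ explicit first-order expressions in $a,b$ and the coordinates, I expect (a) to force conditions of the form $e(u^2)=2\gamma\, u^2$ and $\beta=0$, and (b) to impose one further relation involving $w(u^2)$ and $\gamma$. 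The resulting scalar relations are polynomial in $a,b,u^2$ and their first derivatives.

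\emph{Step 3: restrict to $\Sigma$ and derive a contradiction.} On $U\cap\Sigma$ we have $u^2=0$ and $b\neq0$. Here the key input is $d(u^2)=du^2\neq0$, together with the fact that $e$ and $w$ both have nonzero $\partial_1$-component along $\Sigma$ (their $\partial_1$-components are $a$ and $b$, and $b\neq0$). I would evaluate the relations from (a) and (b) along $\Sigma$, where terms carrying a factor $u^2$ drop out, and aim to derive that some nonvanishing quantity must vanish: for instance that $w(u^2)=b\,\partial_1u^2+a u^2\,\partial_2 u^2$ must vanish, or that $b=0$. Either conclusion contradicts cyclicity along $\Sigma$.

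As a cross-check and an alternative route, I would compare with Hertling's classification of two-dimensional germs of F-manifolds: semisimple, $\partial_2\circ\partial_2=0$, or $I_2(m)$ with $\partial_2\circ\partial_2=y^{m-2}\partial_1$. The aim is to show that the discriminant function $u^2$ of $\circ_v$, which vanishes to first order transversally along $\Sigma$ and changes sign there, is incompatible with the normal form that a germ at a point of $\Sigma$ would have to take.

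\emph{Main obstacle.} The hard step is Step 3. The naive invariant, the discriminant of multiplication by $w$, is only defined up to multiplication by squares of units, so a simple vanishing-order argument may not immediately exclude $I_2(3)$. I would therefore first carry out the explicit PDE computation in Step 2 and look for an extra constraint on $[e,w]$ along $\Sigma$, coming from $\mathcal L_e(\circ)=0$ combined with $w\circ w=u^2e$, that the specific tensor $A$ of \eqref{eq:system} cannot satisfy.
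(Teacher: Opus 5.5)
\textbf{Verdict: the route is the paper's, but the proposal stops before the contradiction.} The paper also takes $X=Y=e$ in the Hertling--Manin condition to get $\mathcal{L}_e(\circ)=0$, evaluates it on $(w,w)$, and compares with $\det[v\;\,Av]|_{\Sigma}=-b^2$. The relations you only ``expect'' from (a) are correct. With $[e,w]=\beta e+\gamma w$ one computes $\mathcal{L}_e(\circ)(w,w)=[e,u^2e]-2\,w\circ[e,w]=\bigl(e(u^2)-2\gamma u^2\bigr)e-2\beta w$, so $\beta=0$ and $e(u^2)=2\gamma u^2$.

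\textbf{The gap.} You never draw the conclusion, and you label Step 3 as the hard, unresolved step. The missing observation takes one line. Since $u^2$ is a coordinate, $e(u^2)=a\,\partial_1u^2+b\,\partial_2u^2=b$. Restricting $e(u^2)=2\gamma u^2$ to $U\cap\Sigma\neq\emptyset$ gives $b=0$ there. This contradicts the condition $b\neq0$ on $U\cap\Sigma$ that you derived from cyclicity in Step 1. Relation (b) and any appeal to Hertling's classification are unnecessary.

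\textbf{Problems with the Step 3 suggestions.} Several of your indications point the wrong way, which is probably why you missed the closing step.
\begin{itemize}
\item Since $\partial_1u^2=0$, $w(u^2)=au^2$ vanishes identically on $\Sigma$. Showing that it vanishes there yields no contradiction.
\item It is not true that $e$ has nonzero $\partial_1$-component along $\Sigma$. The field $v=\partial_2$ is cyclic everywhere, since $\det[v\;\,Av]=-1$, yet it has $a=0$.
\item The relevant component is $b$. Cyclicity forces $e$ to be transverse to $\Sigma$, while $\mathcal{L}_e(\circ)=0$ forces $e(u^2)=0$ on $\Sigma$, i.e.\ a good unit must be tangent to $\Sigma$. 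This is exactly the mechanism the paper isolates: the kernel $\langle\partial_1\rangle$ of the nilpotent operator $A|_{\Sigma}$ coincides with $T\Sigma$.
\end{itemize}

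\textbf{On your $I_2(3)$ worry.} It is justified, and it shows why a discriminant or vanishing-order argument alone cannot succeed. In Hertling's normal form for $I_2(3)$ one also has $(\partial_2\circ)^2=y\,\Id$, with discriminant vanishing to first order and changing sign. That germ is a genuine F-manifold, with $\partial_1$ a cyclic unit for $\partial_2\circ$. It differs from the present $A$ only in that the kernel of the nilpotent operator on $\{y=0\}$ is $\langle\partial_2\rangle$, which is transverse to the discriminant locus. So the classification route could only work by using this same tangency information, at which point the direct one-line argument above is simpler.
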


\begin{proof}
Let $v$ be cyclic on $U$, and define for simplicity $w:=Av$ and $\circ:=\circ_v$. In the frame $(v,w)$, the product is determined by
\begin{equation}\label{eq:structure}
        v\circ v=v,
        \qquad
        v\circ w=w,
        \qquad
        w\circ w=A^2v=u^2\,v ,
\end{equation}
where the last equality uses \eqref{eq:frame}: indeed, $C_w$ commutes with $A$ and sends $v$ to $Av$,  thus $C_w=A$ by cyclicity, so $w\circ w=Aw=A^2v$.

Taking $X=Y=v$ in \eqref{eq:HM} and recalling that $v$ is the identity of the canonical product gives $\mathcal{L}_v(\circ)=2\,\mathcal{L}_v(\circ)$,  therefore
\begin{equation}\label{eq:unit-invariant}
        \mathcal{L}_v(\circ)=0 .
\end{equation}

 Write $m:=[v,w]$. Using \eqref{eq:structure}, we have $[v,fv]=v(f)\,v$, and from \eqref{eq:LieDerprod}
\[
        \mathcal{L}_v(\circ)(w,w)
        =[v,w\circ w]-2\,w\circ[v,w]
        =[v,u^2\,v]-2\,w\circ m
        =v(u^2)\,v-2\,w\circ m ,
\]
so \eqref{eq:unit-invariant} yields
\begin{equation}\label{eq:key}
        2\,w\circ m=v(u^2)\,v .
\end{equation}

Now we expand $m=\alpha v+\beta w$ in the chosen frame. Then
\[
        w\circ m=\alpha\,w+\beta\,(w\circ w)=\alpha\,w+\beta u^2\,v ,
\]
which on $\Sigma$, where $u^2=0$, reduces to $\alpha w$. Thus on $\Sigma$ the left-hand side of \eqref{eq:key} lies in $\langle w\rangle$, while the right-hand side lies in $\langle v\rangle$. Since $v$ and $w$ are independent, both sides must vanish, and in particular
\begin{equation}\label{eq:tangency}
        v(u^2)=0
        \qquad\text{on }\Sigma .
\end{equation}

Now we show that this contradicts cyclicity.  Write $v=a\,\partial_1+b\,\partial_2$. Then $v(u^2)=b$, so \eqref{eq:tangency} forces $b=0$ on $\Sigma$. On the other hand, $w=Av=b\,\partial_1+au^2\,\partial_2$ by \eqref{eq:frame}, and therefore
\[
        \det(v,w)=a^2u^2-b^2,
        \qquad
        \det(v,w)\big|_{\Sigma}=-b^2 .
\]
Cyclicity requires $\det(v,w)\neq0$, i.e. $b\neq0$ on $\Sigma$, which contradicts $b=0$. This proves the theorem.
\end{proof}

The mechanism that produces the contradiction is the coincidence of two lines. On $\Sigma$ the operator is nilpotent, $A|_\Sigma=N=\begin{pmatrix}0&1\\0&0\end{pmatrix}$, with $\ker N=\operatorname{Im}N=\langle\partial_1\rangle$, and this line is tangent to $\Sigma=\{u^2=0\}$. Condition \eqref{eq:tangency} says that the unit must be tangent to $\Sigma$ to be a good unit,   thus it lies in $\langle\partial_1\rangle$. On the other hand, cyclicity demands that it avoids $\ker N=\langle\partial_1\rangle$, since any vector in the kernel is an eigenvector of $A|_\Sigma$ and hence cannot be cyclic. These two requirements are incompatible precisely because the nilpotent line is tangent to the singular locus.

For the more general family
\[
        A_c=\begin{pmatrix}0&1\\c&0\end{pmatrix},
        \qquad
        c\in\mathcal{O}(M),
        \qquad
        dc\neq0\ \text{ on }\ \Sigma:=\{c=0\} ,
\]
we have $A_c^2=c\,\Id$, so $A_c$ is $\mathfrak{gl}$-regular everywhere.  Moreover, $A_c|_\Sigma=N$ and $\ker N=\langle\partial_1\rangle$.  The steps above in the proof of Theorem \ref{thm:failure product rep}   carry over and give $v(c)=0$ on $\Sigma$, while the last step in the proof gives $\det(v,A_cv)|_\Sigma=-b^2$, so cyclicity requires $b\neq0$. Writing $v=a\,\partial_1+b\,\partial_2$, the condition $v(c)=0$ reads $a\,\partial_1c+b\,\partial_2c=0$ on $\Sigma$. Therefore no good unit exists precisely when $\partial_1c=0$ on $\Sigma$, i.e. when  $\ker N$ is tangent to $\Sigma$: in that case $\partial_2c\neq0$ and the condition forces $b=0$.

\end{document}